\documentclass[journal,twoside,web]{ieeecolor}
\usepackage{generic}
\usepackage{cite}
\usepackage{amsmath,amssymb,amsfonts}
\usepackage{algorithmic}
\usepackage{graphicx}
\usepackage{textcomp}
\usepackage{xcolor}
\usepackage[scr]{rsfso}
\usepackage{soul}
\usepackage{hyperref}

\newtheorem{definition}{Definition}
\newtheorem{theorem}{Theorem}
\newtheorem{remark}{Remark}
\newtheorem{example}{Example}
\newtheorem{corollary}{Corollary}
\newtheorem{lemma}{Lemma}
\newtheorem{proposition}{Proposition}

\def\BibTeX{{\rm B\kern-.05em{\sc i\kern-.025em b}\kern-.08em
    T\kern-.1667em\lower.7ex\hbox{E}\kern-.125emX}}
\markboth{\journalname, VOL. XX, NO. XX, XXXX 2017}
{Author \MakeLowercase{\textit{et al.}}: Preparation of Papers for IEEE TRANSACTIONS and JOURNALS (February 2017)}
\begin{document}
\title{Neural Operators for 
Bypassing Gain and Control Computations in PDE Backstepping}
\author{  Luke Bhan,
        Yuanyuan Shi, 
        and Miroslav Krstic
\vspace*{-3em}
        \thanks{The authors are with the University of California, San Diego, USA, lbhan@ucsd.edu, yyshi@eng.ucsd.edu, {krstic@ucsd.edu}}
}

\maketitle


\begin{abstract}
We introduce a framework for eliminating the computation of controller gain functions in PDE control. We learn the nonlinear operator from the plant parameters to the control gains with a (deep) neural network. We provide closed-loop stability guarantees (global exponential) under an NN-approximation of the feedback gains. While, in the existing PDE backstepping, finding the gain kernel requires (one offline) solution to an integral equation, the neural operator (NO) approach we propose learns the mapping from the functional coefficients of the plant PDE to the kernel function by employing a sufficiently high number of offline numerical solutions to the kernel integral equation, for a large enough number of the PDE model's different functional coefficients. We prove the existence of a DeepONet approximation, with arbitrarily high accuracy, of the exact nonlinear continuous operator 
mapping PDE coefficient functions into gain functions. 
Once proven to exist, learning of the NO is standard, completed ``once and for all'' (never online) 
and the kernel integral equation doesn't need to be solved ever again, for any new functional coefficient 
not exceeding the magnitude of the functional coefficients used for training. 
We also present an extension from approximating the gain kernel operator to approximating the full feedback law mapping, from plant parameter functions and state measurement functions to the control input, with semiglobal practical stability guarantees.
Simulation illustrations are provided and code is available on \href{https://github.com/lukebhan/NeuralOperatorsForGainKernels}{github}. This framework, eliminating real-time recomputation of gains, has the potential to be game changing for adaptive control of PDEs and gain scheduling control of nonlinear PDEs.

The paper requires no prior background in machine learning or neural networks.
\end{abstract}


\vspace*{-1em}

\section{Introduction}

ML/AI is often (not entirely unjustifiably) thought of as an `existential threat' to model-based sciences, from physics to conventional control theory. In recent years, a framework has emerged \cite{lu2019deeponet,lu2021deeponet, li2020neural,li2021fourier}, initiated by George Karniadakis, his coauthors, and teams led by Anima Anandkumar and Andrew Stuart, which promises to unite the goals of physics and learning, rather than presenting learning as an alternative or substitute to first-principles physics. In this framework, often referred to as neural operators (NO), which is formulated as learning of mappings from function spaces into function spaces, and is particularly suitable for PDEs, solution/``flow'' maps can be learned after a  sufficiently large number of simulations for different initial conditions. (In some cases,  parameters of models can also be identified from experiments.)


\paragraph{Mappings of plant parameters to control gains and learning of those maps} 
One can't but ask
what the neural operator reasoning can offer to control theory, namely, to the design of controllers, observers, and online parameter estimators. This paper is the first venture in this direction, a breakthrough with further possibilities, and a blueprint (of a long series of steps) to learn PDE control designs and prove their stability. 

In control systems (feedback controllers, observers, identifiers), various kinds of nonlinear maps arise, some from vector  into vector spaces, others from vector or function spaces into function spaces. Some of the maps have time as an argument (making the domain infinite) and others are mappings from compact domains into compact image sets, such as mappings converting system coefficients into controller coefficients, such as the mapping $K(A,B)$ for the closed-loop system $\dot x = Ax + Bu,\ u=Kx$ (under either pole placement or LQR). 

While learning nonlinear maps for various design problems for nonlinear ODEs would be worth a study, we focus in this initial work one step beyond, on a benchmark PDE control class. Our focus on an uncomplicated---but unstable---PDE control class is for pedagogical reasons. Combining the operator learning with
{\em PDE backstepping} is complex enough even for the simplest-looking among PDE stabilization problems. 

\paragraph{PDE backstepping control with the gain computation obviated using neural operators}
Consider 1D hyperbolic partial integro-differential equation systems of the general form $v_t(x,t) = v_x(x,t) + \lambda(x) v(x,t) + g(x) v(0,t) +\int_0^x f(x,y) v(y,t) dy$ on the  unit interval $x\in[0,1]$, which are  transformable, using an invertible backstepping ``pre-transformation'' introduced in \cite{Bernard2014} into the simple PDE 
\begin{eqnarray}
    \label{eq-PDE}
	u_t(x,t) &=& u_x(x,t) + \beta(x) u(0, t) 
 \\  \label{eq-PDEBC}
	u(1, t) &=& U(t). 
\end{eqnarray}
Our goal is the design of a PDE backstepping boundary control 
\begin{equation}\label{eq-bkstfbkk}
    U(t) = \int_0^1 k(1-y) u(y,t) dy. 
\end{equation}
Physically, \eqref{eq-PDE} is a ``transport process (from $x=1$ towards $x=0$) with recirculation'' of the outlet variable $u(0,t)$. Recirculation causes instability when the  coefficient $\beta(x)$ is positive and large. This instability is prevented by the backstepping boundary feedback \eqref{eq-bkstfbkk} with the gain  function $k(\cdot)$ as a kernel in the spatial integration of the measured state $u(y,t)$. (The full state does not need to be measured, as explained in Remark \ref{rem-observer} at the end of Section \ref{sec-stabilityDeepONet}.)

Backstepping  produces the gain kernel $k$ for a given  $\beta$. The mapping ${\cal K} : \beta \mapsto k$ is nonlinear, continuous, and we learn it. 

Why do we care to learn ${\cal K}$? The kernel {\em function} $k$ can always be computed for a particular $\beta$, so what is the interest in learning the {\em functional mapping/operator}? Once ${\cal K}$ is learned, $k$ no longer needs to be sought, for a new $\beta$, as a solution to a partial differential or integral equation. For the next/new $\beta$, finding $k$ is simply a ``function evaluation'' of the learned mapping ${\cal K}$. This provides  benefits in 
both adaptive control where, at each time step, the gain  estimate $\hat k$ has to be computed for a new parameter update $\hat\beta$, and in gain scheduling for nonlinear PDEs where the gain  has to be recomputed at each current value of the state. 

\begin{figure}[t]
    \centering \includegraphics{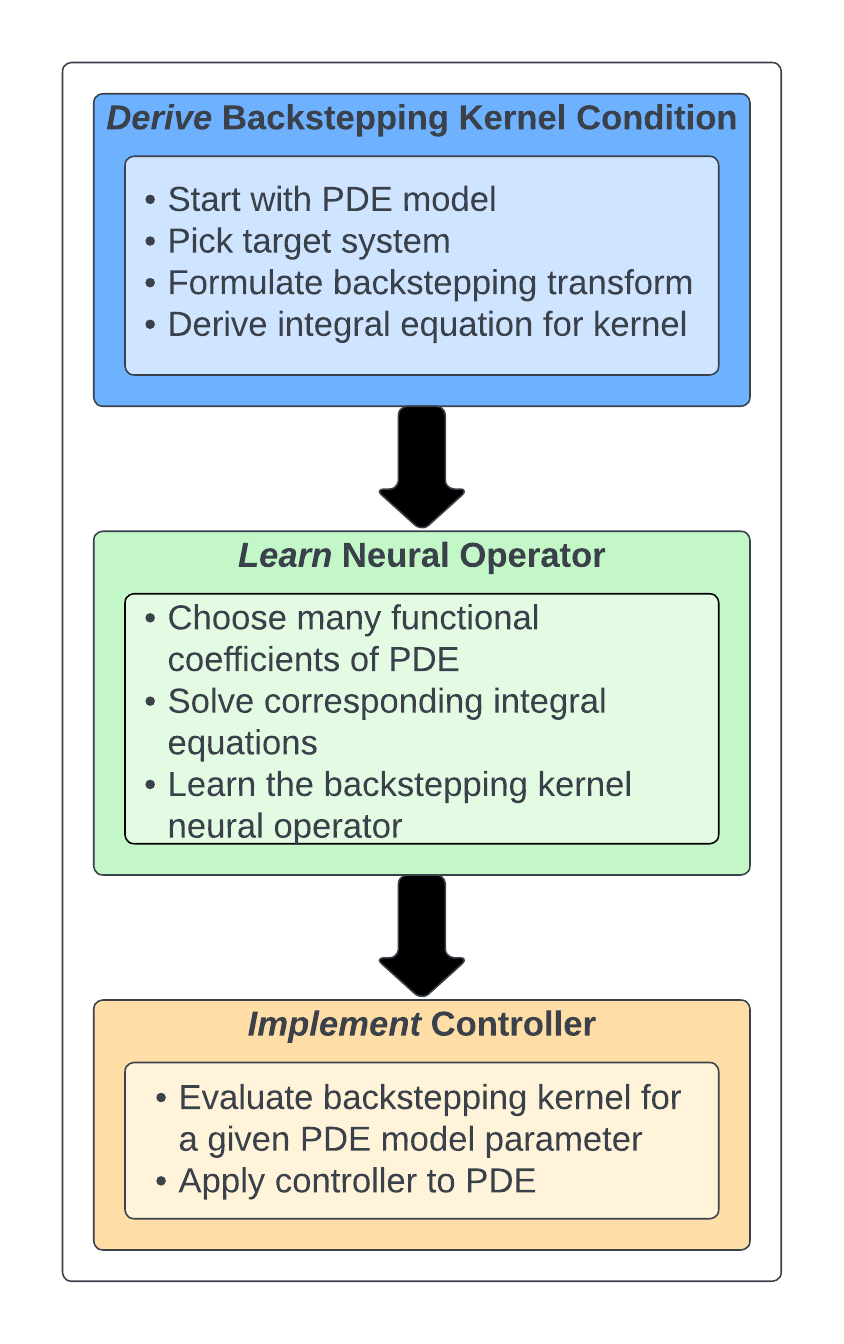}
	\caption{An algorithmic representation of our design paradigm of employing neural operators in boundary control of PDEs. Three major step clusters are performed: (1)  \underline{derivation} of the integral equations for the backstepping kernels, performed only once; (2)  \underline{learning} of the mapping from the plant parameter functions into the backstepping kernel functions, also performed only once; and (3) \underline{implementation} of the controller for specific plant parameters. The task in the top box has  been completed in \cite{krstic2008Backstepping}. In this paper, the task in the middle box is introduced and stability guarantees for the task in the bottom box are provided.}
    \label{fig:0}
\end{figure}

As well known, learning (ML, in general, and its operator learning varieties: DeepONet, FNO, LOCA, NOMAD, etc.) comes with an upfront price. Large data sets need to be first produced, and then large (possibly ``deep'') neural networks need to be trained. There is no exception to this in the approach we propose. For a large sample set of recirculation functions $\beta_i$, we need to first solve for the corresponding backstepping kernels $k_i$. After that, a NN approximation of ${\cal K}$ needs to be trained on that data set of the $(\beta_i, k_i)$ pairs. 

One can stop at producing the NN approximation of the mapping ${\cal K}$ and proceed with a heuristic use of the  approximated gains $\hat k$. But we don't stop there. We ask whether the PDE system will be still stabilized with the NN-approximated gain kernel $\hat k$. Our main theoretical result is affirmative. With a large enough data set of solved pairs $(\beta_i, k_i)$, and a large enough trained (deep) NN, closed-loop stability is guaranteed for a new $\beta$, not in the training set. 

When ML is applied in the control  context (as RL or other approaches), it is usually regarded as a model-free design. Our design, summarized in Figure \ref{fig:0}, is not model-free; it is model-based. It is only that the computational portion of this model-based (PDE backstepping) design is obviated through ML. 

Our learning is offline; not as in adaptive control \cite{Bernard2014,Anfinsen2019Adaptive}. 

\paragraph{Neural operator literature---a brief summary}
    Neural operators are NN-parameterized maps for learning relationships between function spaces. They originally gained popularity due to their success in mapping PDE solutions while remaining discretization-invariant. Generally, nonlinear operators consist of three components: an encoder, an approximator, and a reconstructor \cite{lanthaler2022errorDeeponet}. The encoder is an interpolation from an infinite-dimensional function space to a finite-dimensional vector representation. 
    The approximator aims to mimic the infinite map using a finite-dimensional representation of both the domain function space and the target function space. 
    The reconstructor then transforms the approximation output into the infinite-dimensional target function space. 
    The implementation of both the approximator and the reconstructor is generally coupled and can take many forms. For example, the original DeepONet \cite{lu2021deeponet} contains a ``branch'' net that represents the approximation network and a ``trunk'' net that builds a basis for the target function space. The outputs of the two networks are then taken in linear combination with each other to form the operator. FNO \cite{li2021fourier} utilizes the approximation network in a Fourier domain where the reconstruction is done on a basis of the trigonometric polynomials. LOCA \cite{kissas2022loca} integrates the approximation network and reconstruction step with a unified attention mechanism. 
    NOMAD \cite{seidman2022nomad} extends the linear reconstructor map in DeepONet to a nonlinear map that is capable of learning on nonlinear submanifolds in function spaces.
    There have been many more extensions to the neural operator architectures omitted here as they are usually designed around domain-specific enhancements \cite{wang2021physicsinformedNeuralOperators} \cite{li2022dissipative} \cite{Pickering2022}. Another line of work, called physics-informed neural networks (PINNs) \cite{raissi2019physics, karniadakis2021physics}, which can be used as generic solvers of PDEs by adding physics constraint loss to neural networks. However, PINNs need to be re-trained for new recirculation function $\beta$, thus not providing as much acceleration for the computation of the backstepping kernels as the neural operators.

\paragraph{Advances in learning-based control}
Among the first in demonstrating the stability of learning-based model predictive controllers (MPC) were the papers \cite{ASWANI20131216,rosolia2017learning}, followed in several directions. First, for nonlinear systems, deep learning-based approaches consist of jointly learning the controller and(or) Lyapunov functions via NNs~\cite{tedrakeNNControl, chang2019neural, chen2020learning, chen2021learning, chen2021learning2, dawson2022safe, chen2022large}. \cite{chang2019neural} proposed a method for learning control policies and NN Lyapunov functions using an empirical Lyapunov loss and then validating using formal verification. \cite{chen2020learning, chen2021learning} generalize the method to learning Lyapunov functions for piecewise linear and hybrid systems, and \cite{chen2021learning2} for learning regions of attraction of  nonlinear systems. 
In addition, \cite{taylor2019control, nguyen2021}
have explored how learning-based control will affect nominal systems with known Lyapunov functions, and 
\cite{boffi2021learning, pfrommer2022tasil, claudioNonlinear2023} studied the problem of learning stability certificates and stable controllers directly from data.
In a similar vein, \cite{frank2022} has developed a provable stable data-driven algorithm based on system measurements and prior knowledge for linear time-invariant systems. 

 In a separate, but related direction, many reinforcement learning (RL) \cite{Bert05,sutton2018reinforcement} control approaches have been developed over the past few years. On the one side, model-based RL has been studied due to its superior sample efficiency and interpretable guarantees. The main focus has been on learning the system dynamics and 
 providing closed-loop guarantees in \emph{finite-time} for both linear systems \cite{dean2018regret,chen2021black,lale2022reinforcement,faradonbeh2018finite,tsiamis2022learning} (and references within), and nonlinear systems~\cite{berkenkamp2017safe,kakade2020information,singh2021learning,lale2022kcrl}. 
 For model-free RL methods, \cite{fazel2018global,mohammadi2021convergence,jiang2022,zhao2023global} proved the convergence of policy optimization, a popular model-free RL method, to the optimal controller for linear time-invariant systems, ~\cite{PANG2020109035, pramod2022} for linear time-varying systems, ~\cite{tang2021analysis} for partially observed linear systems. See \cite{hu2022towards} for a recent review of policy optimization methods for continuous control problems such as the LQR, $H_{\infty}$ control, risk-sensitive control, LQG, and output feedback synthesis. For nonlinear systems, \cite{chow2018lyapunov,choi2020reinforcement, cui2022structured,shi2022stability} investigated policy optimization with stability guarantees in which the stability constraints are derived from control Lyapunov functions.
 In addition to policy optimization methods, \cite{vamvoudakis2010online, lewis2012reinforcement, bhasin2013novel, vamvoudakis2017q} have studied and proved the stability and asymptotic convergence of other model-free RL algorithms such as actor-critic methods~\cite{vamvoudakis2010online, lewis2012reinforcement} and Q-learning~\cite{vamvoudakis2017q} in control affine systems.
In the domain of cyber-physical systems (CPS), a theoretical framework has been developed for learning-based control to handle partially observable systems \cite{andreasSeparation2023}.

Many advances have been made in learning-based control in games and multi-agent systems \cite{zhang2019policy, mazumdar2020gradient, fiez2020implicit, mojica2022stackelberg, zhang2021multi, mao2022provably, 
poveda2022fixed, fiez2020implicit, vamvoudakis2017game, qu2020scalable, lin2021multi}. Convergence is characterized for various learning-based methods to Nash equilibria in zero-sum linear quadratic games~\cite{zhang2019policy}, continuous games~\cite{mazumdar2020gradient}, Stackelberg games~\cite{fiez2020implicit, mojica2022stackelberg}, Markov games~\cite{zhang2020model,mao2022provably}, and multi-agent learning over networked systems~\cite{qu2020scalable, lin2021multi, poveda2022fixed}. A recent review for learning-based control in games is in~\cite{zhang2021multi}. 

We focus on learning-based control for PDE systems. In our previous work~\cite{shi2022machine}, we demonstrate the empirical success of using NOs for accelerating PDE backstepping observers, without theoretical guarantees. This work represents the first step towards using NOs for provably bypassing gain computations (with exponential stability guarantees) or directly learning the controller (with practical stability) in PDE backstepping. 
\paragraph{Backstepping control of first-order hyperbolic PDEs}
The PDE system \eqref{eq-PDE}, \eqref{eq-PDEBC} is the simplest open-loop unstable PDE of any kind which can be of interest to the researcher working on PDE stabilization by boundary control. This system is treated here as a technical benchmark, as was done as well in \cite{Bernard2014} and a number of other references offering methodological advances in PDE stabilization. System \eqref{eq-PDE}, \eqref{eq-PDEBC} is a particular case of a single-PDE hyperbolic class in \cite{krstic2008Backstepping} for which PDE backstepping was first introduced in the hyperbolic setting. Coupled systems of first-order hyperbolic PDEs are of greater interest because they arise in fluid flows, traffic flows, elastic structures, and other applications. The first result on backstepping for a {\em pair} of coupled hyperbolic PDEs was in \cite{Coron2013Local}. The extension from two to $n+1$ hyperbolic PDEs, with actuation of only one and with counterconvection of $n$ other PDEs was introduced in \cite{Meglio2013Stabilization}. An extension from $n+1$ to $n+m$ coupled PDEs, with actuation on $m$ ``homodirectional'' PDEs, was provided in \cite{Hu2016Control,hu2019boundary}. Redesigns that are robust to delays were provided in \cite{Auriol2018Delay1}. 
An extension from coupled hyperbolic PDEs to cascades with ODEs was presented in \cite{DIMEGLIO2018281}. An extension from hyperbolic PDE-ODE cascades to ``sandwiched'' ODE-PDE-ODE systems was presented in \cite{WANG2020109131} and an event-triggered design for such systems was given in \cite{9319184}. The extension of PDE backstepping to output-feedback regulation with disturbances is proposed in \cite{DEUTSCHER201556,deutscher2018}. For coupled hyperbolic PDEs with unknown parameters, a comprehensive collection of adaptive control designs was provided in the book \cite{Anfinsen2019Adaptive}. Applications of backstepping to coupled hyperbolic PDE models of traffic are introduced in \cite{Yu2019Traffic,Yu2022}.

\paragraph{Paper outline and contributions}
After a brief introduction to the backstepping design in Section \ref{sec-bkst-intro}, for system \eqref{eq-PDE}, \eqref{eq-PDEBC}, in Section \ref{sec-kernelLip} we prove that the backstepping kernel operator is locally Lipschitz, between the spaces of continuous functions, with which we satisfy a sufficient condition for the existence of a neural operator approximation of a nonlinear operator to arbitrarily high accuracy---stated at the section's end in a formal result and illustrated with an example of approximating the operator $k={\cal K}(\beta)$. 
In Section \ref{sec-stabilityDeepONet} we present the first of our main results: the closed-loop stabilization (not merely practical but exponential) with a DeepONet-approximated backstepping gain kerne  l function. In Section \ref{sec-simulations} we present simulation results that illustrate stabilization under DeepONet-approximated gains. Then, in Section~\ref{sec-beta,u->u} we pose the question of whether we can not only approximate the gain kernel mapping $\beta(x)\mapsto k(x)$, as in Sections \ref{sec-kernelLip}
and \ref{sec-stabilityDeepONet}, but the entire feedback law mapping $(\beta(x),u(x,t)) \mapsto \int_0^1 k(1-y)u(y,t)dy$ at each time instant $t$; we provide an affirmative answer and a guarantee of semiglobal practical exponential stability under such a DeepONet approximation. In Section \ref{sims-fbklawapprox} we illustrate this feedback law approximation with a theory-confirming simulation. Then, in Section \ref{sec-PIDEextension}, we present the paper's most general result, which we leave for the end for pedagogical reasons, since it deals with Volterra operator kernel functions of two variables, $(x,y)$, on a triangular domain, and requires continuity of mappings between  spaces of functions that are not just continuous but continuously differentiable, so that not only the backstepping kernel is accurately approximable but also the kernel's spatial derivatives, as required for closed-loop stability. We close with a numerical illustration for this general case in Section \ref{sec-simulationsQ}.

In summary, the paper's contributions are the  PDE stabilization under DeepONet approximations of backstepping gain kernels (Theorems \ref{thm-stabDeepONet} and \ref{thm-stabDeepONet-gf}) and under the approximation of backstepping feedback laws (Theorem \ref{thm-semiglobalpractical}). Our stabilization results also hold for any  other neural operators with a universal approximation property (shown for LOCA~\cite{kissas2022loca} and for FNO on the periodic domain~\cite{kovachki2021universal}).   

\paragraph{Notation} 
We denote 
convolution operations 
as
\begin{eqnarray}
   ( a * b)(x) = \int_0^x a(x-y) b(y) dy
\end{eqnarray}
In the sequel, we 
suppresses the arguments $x$ and $t$ wherever 
clear from the context. For instance, we write \eqref{eq-PDE}, \eqref{eq-PDEBC} compactly as $u_t=u_x +\beta u(0)$ and $u(1)=U$, where, from the context, the boundary values $u(0), u(1)$ depend on $t$ as well. 

\section{Backstepping Design for a Transport PDE with `Recirculation'}
\label{sec-bkst-intro}

Consider the PDE system \eqref{eq-PDE}, \eqref{eq-PDEBC}. 
We employ the following backstepping transformation:
\begin{eqnarray}\label{eq-bkrsttransconv}
	w = u - k\ast u,
\end{eqnarray}
i.e., $w(x, t) = u(x, t) - \int_0^x k(x- y)u(y, t) dy$, to convert the plant into the target system
\begin{eqnarray}\label{eq-targetPDE}
	 w_t &= &w_x \\ \label{eq-targetPDEBC}
	w(1) &=& 0
\end{eqnarray}
with the help of feedback
\begin{equation}\label{sec-perfectfbk}
    U = (k\ast u)(1),
\end{equation}
namely, $U(t) = \int_0^1 k(1- y)u(y, t) dy$. 
To yield the target system, 
$k$ must satisfy the integral/convolution equation 
\begin{eqnarray} \label{eq:1.11}
	k(x) = - \beta(x)+ \int_0^x 
 \beta(x-y) k(y) 
 dy 
\end{eqnarray}
for $x\in[0,1]$. Note that, while this integral equation is linear in $k$ for a given $\beta$, the mapping from $\beta$ to $k$ is actually nonlinear, due to the product in the convolution of $\beta$ with $k$.

\section{Accuracy of Approximation of Backstepping Kernel Operator with DeepONet}
\label{sec-kernelLip}


An $n$-layer NN $f^\mathcal{N}:\mathbb{R}^{d_1}\rightarrow \mathbb{R}^{d_n}$ is given by
\begin{eqnarray} \label{eq-nn}
    f^\mathcal{N}(x, \theta) := (l_n \circ l_{n-1} \circ ... \circ l_2 \circ l_1) (x,\theta)
\end{eqnarray}
where layers $l_i$ start with $l_0 = x\in\mathbb{R}^{d_1}$ and continue as 
\begin{equation}
    l_{i+1}
    (l_i,\theta_{i+1}):= \sigma(W_{i+1} l_i + b_{i+1}), \quad i=1,\ldots,n-1
\end{equation}
$\sigma$ is a nonlinear activation function, and weights $W_{i+1} \in \mathbb{R}^{d_{i+1} \times d_i}$ and biases $b_{i+1} \in \mathbb{R}^{d_{i+1}}$ are parameters to be learned, collected into $\theta_i\in \mathbb{R}^{d_{i+1}(d_i+1)}$, 
and then 
into $\theta = [\theta_1^{\rm T},\ldots, \theta_n^{\rm T}]^{\rm T}\in\mathbb{R}^{\sum_{i=1}^{n-1} d_{i+1}(d_i+1)}$.
Let $\vartheta^{(k)}, \theta^{(k)} \in\mathbb{R}^{\sum_{i=1}^{k-1} d_{k,(i+1)}(d_{k,i}+1)}$ denote a sequence of NN weights.

An neural operator (NO) 
for approximating a nonlinear operator $\mathcal{G}: {\cal U} \mapsto {\cal V}$ is defined as 
\begin{eqnarray}
    {\cal G}_\mathbb{N}(\mathbf{u}_m)(y) = \sum_{k=1}^p g^\mathcal{N} (\mathbf{u}_m; \vartheta^{(k)}) f^\mathcal{N} (y; \theta^{(k)})
\end{eqnarray}
where ${\cal U}, {\cal V}$ are function spaces of continuous functions $u \in {\cal U}, v \in {\cal V}$.
$\mathbf{u}_m$ is the evaluation of function $u$ at points $x_i=x_1, ..., x_m$, $p$ is the number of chosen basis components in the target space, $y \in Y$ is the location of the output function $v(y)$ evaluations, 
and $g^\mathcal{N}$, $f^\mathcal{N}$ are NNs termed  branch and trunk networks. Note, $g^\mathcal{N}$ and $f^\mathcal{N}$ are not limited to feedforward NNs \ref{eq-nn}, but can also be of 
convolutional or recurrent.


\begin{theorem}\label{thm-DeepONet}
{\em (DeepONet universal approximation theorem \cite[Theorem 2.1]{lu2021advectionDeepONet}).}
Let 
$X \subset \mathbb{R}^{d_x}$ and 
 $Y \subset \mathbb{R}^{d_y}$ be compact sets of vectors $x\in X$ and $y\in Y$, respectively. 
Let ${\cal U}:X\rightarrow U\subset \mathbb{R}^{d_u}$ and ${\cal V}:Y\rightarrow V\subset \mathbb{R}^{d_v}$ be sets of continuous functions $u(x)$ and $v(y)$, respectively. 
Let ${\cal U}$ be also compact.
Assume  
the operator $\mathcal{G}: {\cal U} \rightarrow {\cal V}$ 
is 
continuous.
Then, for all $\epsilon > 0$, there exist $m^*, p^* \in \mathbb{N}$ such that for each $m \geq m^*$, $p \geq p^*$, there exist $\theta^{(k)}, \vartheta^{(k)}$, neural networks $f^{\mathcal{N}}(\cdot ; \theta^{(k)}) , g^\mathcal{N}(\cdot ; \vartheta^{(k)}),  k=1,\ldots, p$, and $ x_j \in X,  j=1, \ldots, m$, with corresponding $\mathbf{u}_m = (u(x_1), u(x_2), \cdots, u(x_m))^{\rm T}$, such that
\begin{equation} \label{eq:2.14G}
|\mathcal{G}(u)(y) - \mathcal{G}_\mathbb{N}(\mathbf{u}_m)(y)| < \epsilon 
\end{equation}
for all functions $u \in {\cal U}$ and all values $y \in Y$ of ${\cal G}(u)\in{\cal V}$. 
\end{theorem}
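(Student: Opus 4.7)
The plan is to follow the classical three-stage decomposition (encoder, approximator, reconstructor) that has been the standard template since Chen and Chen's 1995 functional approximation result, which the DeepONet paper extends. The target inequality \eqref{eq:2.14G} must hold uniformly in both $u\in{\cal U}$ and $y\in Y$, so each of the three stages must be controlled uniformly, and the final bound follows by a triangle inequality with three $\epsilon/3$-terms.

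First I would build the \emph{encoder}. Because ${\cal U}\subset C(X)$ is compact and $X$ is a compact subset of $\mathbb{R}^{d_x}$, Arzel\`a--Ascoli gives uniform equicontinuity of ${\cal U}$: for every $\eta>0$ there is $\delta>0$ such that $|x-x'|<\delta$ implies $|u(x)-u(x')|<\eta$ for all $u\in{\cal U}$ simultaneously. Cover $X$ with finitely many balls of radius $\delta$ centered at $x_1,\dots,x_m$ and define the piecewise-constant (or piecewise-linear) reconstruction $E_m u$ from the sample vector $\mathbf{u}_m=(u(x_1),\dots,u(x_m))^{\rm T}$. Then $\|u - E_m u\|_{C(X)}\le\eta$ uniformly in $u\in{\cal U}$, so the encoded set $\{\mathbf{u}_m:u\in{\cal U}\}$ sits in a compact subset $K_m\subset\mathbb{R}^m$ and the induced map $u\mapsto E_m u$ is continuous with range approaching the identity as $m\to\infty$.

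Next I would construct the \emph{reconstructor} and \emph{approximator} together. Since ${\cal G}$ is continuous on the compact ${\cal U}$, the image ${\cal G}({\cal U})\subset C(Y)$ is compact. A compact set in $C(Y)$ is contained, up to $\epsilon/3$ error, in a finite-dimensional subspace spanned by functions $\varphi_1,\dots,\varphi_p\in C(Y)$ (e.g., by a partition-of-unity argument on $Y$ combined with equicontinuity of ${\cal G}({\cal U})$, or by truncating a Schauder basis). Each $\varphi_k$ is continuous on the compact set $Y$, so by the standard Pinkus/Cybenko universal approximation theorem for feedforward networks, there exist trunk networks $f^\mathcal{N}(\cdot;\theta^{(k)})$ with $\|f^\mathcal{N}(\cdot;\theta^{(k)})-\varphi_k\|_{C(Y)}$ arbitrarily small. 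The coefficient functionals $u\mapsto c_k(u)$ that give the best approximation $\sum_k c_k(u)\varphi_k$ to ${\cal G}(u)$ are continuous on ${\cal U}$; composing with the (continuous) encoder yields continuous functions $\tilde g_k: K_m\to\mathbb{R}$ on a compact subset of $\mathbb{R}^m$, which are themselves uniformly approximable by branch networks $g^\mathcal{N}(\cdot;\vartheta^{(k)})$, again by the classical NN universal approximation theorem.

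Finally I would assemble the bound. Writing
\begin{equation}
{\cal G}(u)(y)-{\cal G}_\mathbb{N}(\mathbf{u}_m)(y)=\bigl[{\cal G}(u)(y)-\textstyle\sum_k c_k(u)\varphi_k(y)\bigr]+\sum_k\bigl[c_k(u)-g^\mathcal{N}(\mathbf{u}_m;\vartheta^{(k)})\bigr]\varphi_k(y)+\sum_k g^\mathcal{N}(\mathbf{u}_m;\vartheta^{(k)})\bigl[\varphi_k(y)-f^\mathcal{N}(y;\theta^{(k)})\bigr],
\end{equation}
the three terms are bounded uniformly by choosing $p$ large (reconstructor error), then the branch and trunk network widths large (using boundedness of $c_k$, $\varphi_k$, $g^\mathcal{N}$ on the relevant compacta). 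The main obstacle is the subtle uniformity in the middle stage: one must ensure that the coefficient map $u\mapsto c_k(u)$ actually factors continuously through the finite sample $\mathbf{u}_m$ for $m$ large enough, which requires combining the encoder's uniform error with the continuity modulus of ${\cal G}$ on ${\cal U}$. The remaining estimates are elementary once this factorization is in place, and the conclusion \eqref{eq:2.14G} follows by choosing $m\ge m^\ast$, $p\ge p^\ast$ so that each of the three contributions is below $\epsilon/3$.
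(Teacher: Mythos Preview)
The paper does not supply a proof of Theorem~\ref{thm-DeepONet}; it is quoted from the reference indicated in the theorem heading and used thereafter as a black box. Consequently there is no in-paper argument to compare your proposal against.

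That said, your outline is the standard Chen--Chen three-stage template on which the cited DeepONet result is built, and the logic is sound. One small technical caution: defining $c_k(u)$ as the coefficients of the \emph{best} approximation in $C(Y)$ can fail to give a continuous (or even single-valued) map $u\mapsto c_k(u)$, since best uniform approximants onto a finite-dimensional subspace need not be unique. The customary repair is to choose the $\varphi_k$ as a partition of unity on $Y$ (or an interpolation basis at nodes $y_1,\dots,y_p$), so that $c_k(u)={\cal G}(u)(y_k)$ are explicit point evaluations; these are manifestly continuous in $u$ and, once the encoder error $\|u-E_m u\|_{C(X)}$ is pushed below the modulus of continuity of ${\cal G}$, factor continuously through $\mathbf{u}_m$. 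With that adjustment your three-term splitting and the subsequent uniform estimates go through exactly as you describe.
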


\begin{definition}{\em (backstepping kernel operator).}
\label{def:kernel}
A mapping ${\cal K}:\beta \mapsto k$ of $C^0[0,1]$ into itself, where $k={\cal K} (\beta)$
satisfies 
\begin{eqnarray}\label{eq:2.12}
\mathcal{K}(\beta) = -\beta + \beta \ast\mathcal{K}(\beta),
\end{eqnarray}
namely, in the Laplace transform notation, 
\begin{equation}
     k=\mathcal{K}(\beta) := \mathscr{L}^{-1} \left\{\frac{\mathscr{L}\{ \beta\}}{\mathscr{L}\{\beta\} - 1}\right\}
\end{equation}
is referred to as the {\em backstepping kernel operator}.
\end{definition}

\begin{lemma}{\em (Lipschitzness of backstepping kernel operator ${\cal K}$).}\label{lemma:holder}
The kernel operator $\mathcal{K}:\beta \mapsto k$  in Definition \ref{def:kernel} is Lipschitz. Specifically, for any $B>0$ the operator $\mathcal{K}$ satisfies 
\begin{eqnarray}
        ||\mathcal{K}(\beta_1) - \mathcal{K}(\beta_2) ||_\infty \leq C ||\beta_1 - \beta_2||_\infty
\end{eqnarray}
with the Lipschitz constant
\begin{equation}\label{eq-LipC}
C=
{\rm e}^{3B}
\end{equation}
for any pair of functions $(\beta_1, \beta_2)$ such that $\|\beta_1\|_\infty, \|\beta_2\|_\infty \leq B$, where $\|\cdot\|_\infty$ is the supremum norm over the argument of $\beta$ and $k$. 
\end{lemma}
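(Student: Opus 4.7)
The plan is to exploit the fact that the defining integral equation $k = -\beta + \beta \ast k$ is linear in $k$ for fixed $\beta$, which reduces the Lipschitz estimate to two successive Gr\"onwall-type arguments: first a uniform a priori bound on $k = \mathcal{K}(\beta)$ in terms of $B$, and then a bound on the difference $\tilde k := \mathcal{K}(\beta_1) - \mathcal{K}(\beta_2)$ in terms of $\tilde\beta := \beta_1-\beta_2$.

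First I would obtain the a priori estimate. Starting from $|k(x)| \le |\beta(x)| + \int_0^x |\beta(x-y)|\,|k(y)|\,dy \le B + B\int_0^x|k(y)|\,dy$, Gr\"onwall's inequality immediately yields $|k(x)| \le B\mathrm{e}^{Bx} \le B\mathrm{e}^B$ for $x\in[0,1]$. This supplies the uniform bound $\|\mathcal{K}(\beta)\|_\infty \le B\mathrm{e}^B$ whenever $\|\beta\|_\infty \le B$.

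Next, I would subtract the two integral equations for $k_1=\mathcal{K}(\beta_1)$ and $k_2=\mathcal{K}(\beta_2)$ and use bilinearity of convolution to write
\begin{equation}
\tilde k = -\tilde\beta + \tilde\beta \ast k_1 + \beta_2 \ast \tilde k.
\end{equation}
Taking absolute values and using the bound on $k_1$ from the previous step gives
\begin{equation}
|\tilde k(x)| \le \|\tilde\beta\|_\infty\bigl(1 + B\mathrm{e}^B\bigr) + B\int_0^x |\tilde k(y)|\,dy,
\end{equation}
after which a second application of Gr\"onwall yields $\|\tilde k\|_\infty \le (1+B\mathrm{e}^B)\mathrm{e}^B\,\|\tilde\beta\|_\infty$. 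To arrive at the clean closed form stated in \eqref{eq-LipC}, I would finally verify the elementary inequality $(1+B\mathrm{e}^B)\mathrm{e}^B \le \mathrm{e}^{3B}$ for all $B\ge 0$, which after dividing by $\mathrm{e}^{3B}$ becomes $\mathrm{e}^{-2B}+B\mathrm{e}^{-B}\le 1$; this reduces to checking equality at $B=0$ and monotone decrease of the left-hand side, a short calculus exercise.

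The only subtle step is setting up the difference equation so that Gr\"onwall applies cleanly, in particular splitting $\beta_1\ast k_1 - \beta_2\ast k_2$ as $\tilde\beta\ast k_1 + \beta_2\ast \tilde k$ so that the unknown $\tilde k$ appears only once on the right and the other term carries the known bound on $k_1$. Everything else is routine; the constant $\mathrm{e}^{3B}$ is a convenient overestimate rather than a tight bound, chosen for its simple appearance in the subsequent closed-loop stability analysis.
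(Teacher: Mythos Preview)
Your proof is correct and follows essentially the same two-step strategy as the paper: first an a priori bound $|k(x)|\le \bar\beta\,\mathrm{e}^{\bar\beta x}$, then a difference equation for $\tilde k$ split so that $\tilde k$ appears only under one convolution, followed by the same estimate yielding $(1+B\mathrm{e}^B)\mathrm{e}^B\le \mathrm{e}^{3B}$. The only cosmetic differences are that the paper carries out both steps via an explicit successive-approximation series (summing $\bar\beta^{n+1}x^n/n!$) rather than invoking Gr\"onwall, and it splits the difference with the indices swapped, $k_1-k_2=\beta_1\ast(k_1-k_2)-\delta\beta+\delta\beta\ast k_2$, which is the symmetric choice and leads to the identical constant.
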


\begin{proof} 
Start with the iteration $k^0 = -\beta, k^{n+1} =k^0 + \beta\ast k^{n}, \ n\geq 0 $ and consider the iteration
\begin{equation}\label{eq-Deltakn}
\Delta k^{n+1} =\beta\ast \Delta k^n,
 \qquad \Delta k^0 = k^0 = - \beta
\end{equation}
for the difference $\Delta k^n = k^n-k^{n-1}$, which sums to 
\begin{eqnarray}\label{eq-Deltaknsum}
    k = \sum_{n=1}^\infty \Delta k^n.
  \end{eqnarray}
Next, for $\bar\beta = \|\beta\|_\infty$ and all $x\in[0,1]$, 
\begin{eqnarray}\label{eq-Deltaknbound}
    \left|\Delta k^{n}(x) \right| \leq {\bar\beta^{n+1}x^{n}\over n!},
\end{eqnarray}
which is established by induction by postulating $ \left|\Delta k^{n-1}(x) \right| \leq {\bar\beta^{n}x^{n-1}\over (n-1)!}$ and by computing, from \eqref{eq-Deltakn}, 
\begin{eqnarray}\label{eq-Deltakninduction}
\left|\Delta k^{n}(x) \right| 
&=& \left| \int_0^x\beta (x-y) \Delta k^{n-1}(y)dy\right|
\nonumber \\ &\leq&  \bar\beta \int_0^x {\bar\beta^{n}y^{n-1}\over (n-1)!}|dy
\leq {\bar\beta^{n+1}x^{n}\over n!}.
\end{eqnarray}
And then, \eqref{eq-Deltaknbound} and \eqref{eq-Deltaknsum} yield
\begin{equation}\label{eq-kbound}
    |k(x)|\leq \bar\beta {\rm e}^{\bar\beta x}
    . 
\end{equation}
Next, for $k_1 = {\cal K}(\beta_1)$ and $k_2 = {\cal K}(\beta_2)$ it is easily verified that 
\begin{equation}
    k_1 - k_2 = \beta_1 \ast(k_1-k_2) -\delta\beta +\delta\beta\ast k_2
\end{equation}
where $\delta\beta = \beta_1 - \beta_2$. Define the iteration
\begin{eqnarray}\label{eq-deltakndelta}
    \delta k^{n+1} &=& \beta_1\ast \delta k^n
    \\ \label{eq-deltak0}
    \delta k^0 &=& - \delta\beta + \delta\beta \ast k_2
\end{eqnarray}
which verifies
$    k_1 -
    k_2 = \sum_{n=1}^\infty \delta k^n$.
Noting that \eqref{eq-kbound} ensures that $k_2={\cal K}(\beta_2)$ verifies $|k_2(x)|\leq \bar\beta_2 {\rm e}^{\bar\beta_2 x}$, from \eqref{eq-deltak0}, 
\begin{equation}
    |\delta k^0 (x)| \leq \left(1+\bar\beta_2{\rm e}^{\bar\beta_2 x}\right)\overline{\delta\beta} \leq \mu_2 \overline{\delta\beta}
\end{equation}
where 
$   \mu_2 := 1+\bar\beta_2{\rm e}^{\bar\beta_2 }$ and $ \overline{\delta\beta} = \|\beta_1-\beta_2\|_\infty$, 
it can be shown by induction, by mimicking the chain of inequalities \eqref{eq-Deltakninduction}, that, for all $x\in[0,1]$, 
\begin{equation}
    \left|\delta k^n(x)\right| \leq \mu_2\overline{\delta\beta} {\bar\beta_1^{n}x^{n}\over n!} 
\end{equation}
and therefore it follows that, for all $x\in[0,1]$, 
\begin{eqnarray}
    |k_1(x) - k_2(x)|& \leq &\left(1+\bar\beta_2{\rm e}^{\bar\beta_2}\right) {\rm e}^{\bar\beta_1 x} \|\beta_1-\beta_2\|_\infty 
    \nonumber \\
    & \leq & {\rm e}^{3B } \|\beta_1-\beta_2\|_\infty  . 
\end{eqnarray}
Hence, local 
Lipschitzness is proven with 
\eqref{eq-LipC}.
\end{proof}

\begin{corollary}{\em (to Theorem~\ref{thm-DeepONet}).}\label{cor-DeepONet}
Consider the backstepping kernel operator ${\cal K}$ in Definition \ref{def:kernel}. 
For all $B>0$ and $\epsilon > 0$, there exist $p^*(B,\epsilon), m^*(B,\epsilon) \in \mathbb{N}$, with an increasing dependence on $B$ and $1/\epsilon$, such that for each $p \geq p^*$ and $m \geq m^*$ there exist $\theta^{(k)}, \vartheta^{(k)}$, neural networks $f^{\mathcal{N}}(\cdot ; \theta^{(k)}) , g^\mathcal{N}(\cdot ; \vartheta^{(k)}),  k=1,\ldots, p$, and $ x_j \in K_1, j=1, \ldots, m$, with corresponding $\mathbf{\beta}_m = (\beta(x_1), \beta(x_2), \cdots, \beta(x_m))^{\rm T}$,
such that
\begin{equation} \label{eq:2.14}
|\mathcal{K}(\beta )(x) - \mathcal{K}_\mathbb{N}(\mathbf{\beta}_m)(x)| < \epsilon 
\end{equation}
holds for all Lipschitz $\beta$ with the property that
$\|\beta\|_\infty\leq B$. 
\end{corollary}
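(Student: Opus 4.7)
My plan is to derive Corollary~\ref{cor-DeepONet} as a direct consequence of Theorem~\ref{thm-DeepONet} by verifying that the backstepping kernel operator $\mathcal{K}$ satisfies the hypotheses of the DeepONet universal approximation theorem on an appropriate function space. The two ingredients needed are (i) a compact domain $\mathcal{U}\subset C^0[0,1]$ on which $\mathcal{K}$ is defined, and (ii) continuity of $\mathcal{K}:\mathcal{U}\to C^0[0,1]$. The second ingredient comes for free from Lemma~\ref{lemma:holder}: local Lipschitzness with constant $e^{3B}$ immediately gives continuity on the $B$-ball. Moreover, the bound $|k(x)|\le \bar\beta e^{\bar\beta x}$ derived in the proof of that lemma, together with the uniform convergence of the series $k=\sum_{n=1}^\infty \Delta k^n$ of continuous functions, shows that $\mathcal{K}(\beta)\in C^0[0,1]$, so the codomain $\mathcal{V}:=\mathcal{K}(\mathcal{U})$ sits inside a set of continuous functions.

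The real work is in isolating a compact set $\mathcal{U}$ that captures ``all Lipschitz $\beta$ with $\|\beta\|_\infty\le B$''. The sup-norm ball in $C^0[0,1]$ is famously not compact, so the Lipschitz qualifier must be read as assuming a uniform Lipschitz constant $L$ (implicit, or made explicit in the proof). I would set $X=Y=[0,1]$ and define
\begin{equation}
    \mathcal{U} \;=\; \bigl\{\beta\in C^0[0,1]\,:\, \|\beta\|_\infty\le B,\ |\beta(x)-\beta(x')|\le L|x-x'|\bigr\}.
\end{equation}
This family is uniformly bounded and equicontinuous, so by the Arzel\`a--Ascoli theorem it is relatively compact in $C^0[0,1]$; the defining inequalities are preserved under uniform limits, so $\mathcal{U}$ is closed and hence compact.

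With $\mathcal{U}$ compact and $\mathcal{K}:\mathcal{U}\to C^0[0,1]$ continuous, Theorem~\ref{thm-DeepONet} applies and yields, for any prescribed $\epsilon>0$, integers $m^\ast,p^\ast$, sample points $x_1,\ldots,x_m\in[0,1]$, and network parameters $\{\theta^{(k)},\vartheta^{(k)}\}_{k=1}^p$ such that the DeepONet $\mathcal{K}_{\mathbb N}(\boldsymbol\beta_m)$ approximates $\mathcal{K}(\beta)(x)$ uniformly in $(\beta,x)\in \mathcal{U}\times[0,1]$ to within $\epsilon$. The increasing dependence of $m^\ast,p^\ast$ on $B$ and on $1/\epsilon$ is then qualitative: enlarging $B$ enlarges $\mathcal{U}$ and inflates the Lipschitz constant $e^{3B}$ of $\mathcal{K}$, so more sensors and more branch/trunk components are needed to resolve the operator; shrinking $\epsilon$ demands finer resolution in the standard way of universal approximation bounds.

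The main obstacle I would emphasize is precisely the compactness of $\mathcal{U}$: without an additional equicontinuity hypothesis (encoded by the Lipschitz qualifier), the sup-norm $B$-ball is not compact and Theorem~\ref{thm-DeepONet} cannot be invoked. Everything else---continuity of $\mathcal{K}$, continuity of the image functions, and the application of Theorem~\ref{thm-DeepONet}---is essentially bookkeeping given Lemma~\ref{lemma:holder}. No additional estimates on $k$ are needed beyond those already proven.
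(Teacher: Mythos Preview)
Your proposal is correct and matches the paper's intended reasoning: the paper offers no explicit proof of Corollary~\ref{cor-DeepONet}, treating it as an immediate consequence of Lemma~\ref{lemma:holder} (Lipschitzness $\Rightarrow$ continuity of $\mathcal{K}$) plugged into Theorem~\ref{thm-DeepONet}. Your additional step of invoking Arzel\`a--Ascoli on a uniformly Lipschitz, uniformly bounded family to secure compactness of $\mathcal{U}$ is exactly the point the paper glosses over with the word ``Lipschitz'' in the corollary statement, and your observation that an implicit uniform Lipschitz bound is needed (since the raw sup-norm ball is not compact) is a correct and necessary clarification.
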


\begin{figure*}[t]
    \includegraphics{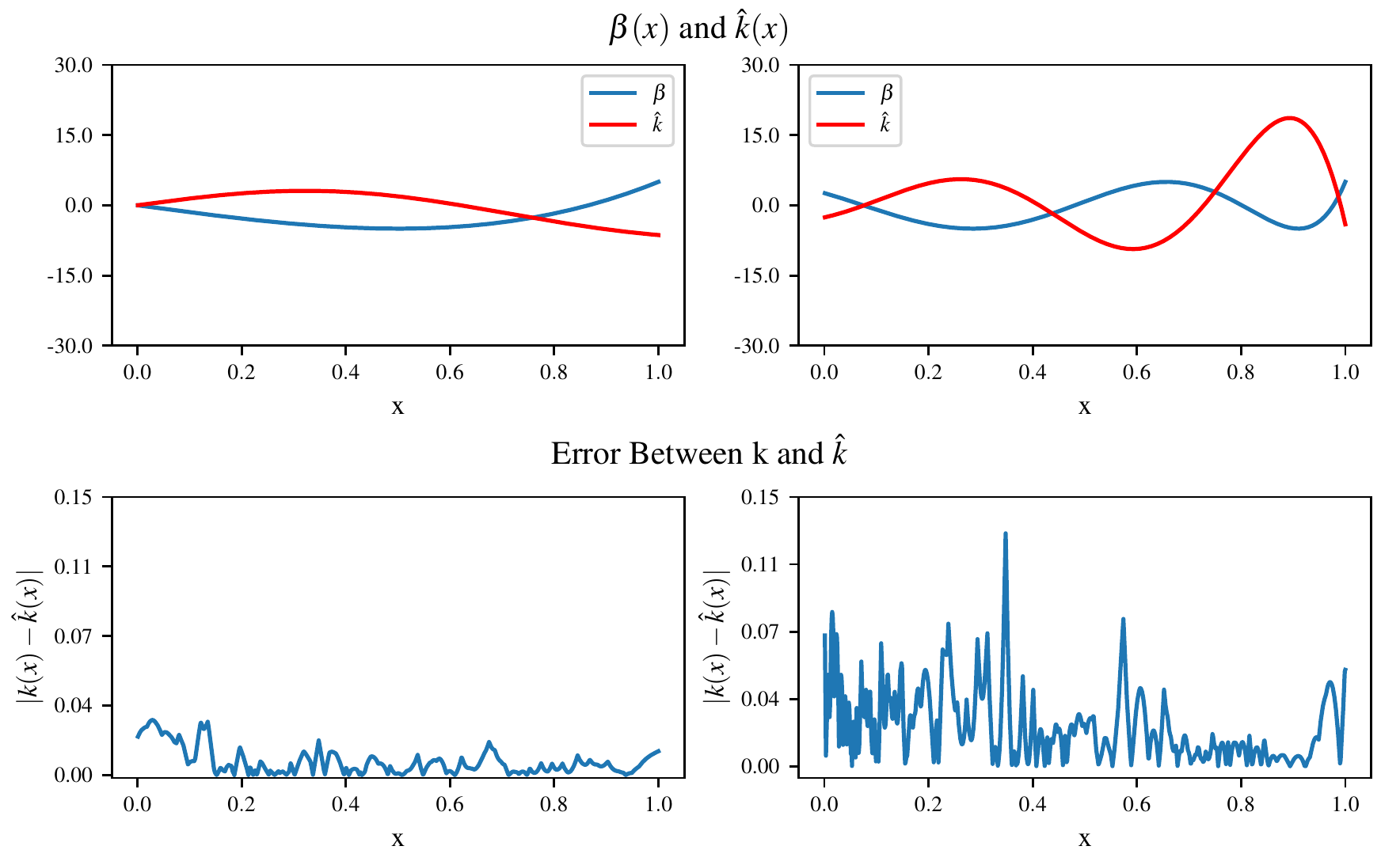}
	\caption{Examples of $\beta$, $\hat{k}$ for Chebyshev polynomials defined as $\beta = 6 \cos (\gamma \cos^{-1}(x))$ with $\gamma = 3, 7.35$ on the left and right respectively. The $\gamma$ parameter controls the wave frequency of $\beta$ and therefore affects the resulting kernel. Additionally, the DeepONet absolute approximation error of $\hat{k}$ and $k$ is shown. The DeepONet approximates the "smoother" function on the left with better precision than the large, oscillating function on the right. }
    \label{fig:1}
\end{figure*}

So the backstepping kernel is approximable, qualitatively, but how many neurons and how much data are needed for a given $\epsilon$? We recall a result on the minimum-sized DeepONet. 

\begin{proposition}{\em (DeepONet size for kernel operator approximation {\cite[Theorem 3.3 and Remark 3.4]{lu2021advectionDeepONet}}).}
    If the  kernel operator defined in \eqref{eq:2.12} is Lipschitz (or at least H\"{o}lder) continuous, a  DeepONet that approximates it to a {\em required} error tolerance $\epsilon >0$ indicated by \eqref{eq:2.14}
    employs the number of data point evaluations for $\beta$ on the order of
    \begin{equation}
        m \sim \epsilon^{-1},
    \end{equation} 
    the number of basis components in the interpolation when reconstructing into $C^0[0,1]$ on the order of 
    \begin{equation}
        p \sim \epsilon^{-\frac{1}{2}},
    \end{equation}
    the numbers of layers $L_{g^\mathcal{N}}$ in the branch network and of neurons $N_{g^\mathcal{N}}$ in each layer of the branch network on the order given, respectively, by 
    \begin{eqnarray}
        N_{g^\mathcal{N}} \cdot L_{g^\mathcal{N}} \sim \left({1\over\epsilon}\right)^{\frac{1}{\epsilon}},
    \end{eqnarray}  
    and the total size of the trunk network on the order of
    \begin{eqnarray}
        |\theta^{(k)}| \sim \left(\frac{3}{2} \log \frac{1}{\epsilon}\right)^2. 
    \end{eqnarray}

\end{proposition}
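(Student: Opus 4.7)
The plan is to follow the error-decomposition strategy of Lu et al.~\cite{lu2021advectionDeepONet}, specialized to the backstepping kernel operator $\mathcal{K}$ whose local Lipschitz property has already been established in Lemma~\ref{lemma:holder}. Writing the total DeepONet error as a sum of three contributions---an \emph{encoding} error from sampling $\beta$ at $m$ points, an \emph{approximation} error of the finite-dimensional branch map, and a \emph{reconstruction} error from truncating $k=\mathcal{K}(\beta)$ to $p$ trunk-basis components---each scaling can be bounded separately, then combined.

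First I would bound the encoding error. For any Lipschitz $\beta$ sampled at $m$ equispaced nodes of $[0,1]$, piecewise-linear interpolation produces a reconstruction $\tilde{\beta}$ with $\|\beta-\tilde{\beta}\|_\infty \lesssim 1/m$; Lemma~\ref{lemma:holder} then propagates this to $\|\mathcal{K}(\beta)-\mathcal{K}(\tilde{\beta})\|_\infty \le e^{3B}/m$. Setting this $\lesssim \epsilon$ yields $m \sim \epsilon^{-1}$. Next I would handle the reconstruction error by expanding $k$ in the first $p$ functions of a fixed smooth basis on $[0,1]$ (trigonometric or Chebyshev). Since the integral equation \eqref{eq:2.12} smooths $\beta$ by a Volterra convolution, $k$ inherits one additional derivative; truncation then decays as $p^{-1/2}$ in the $L^\infty$ norm of the compact domain, giving $p \sim \epsilon^{-1/2}$.

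For the branch network, I would invoke standard ReLU (or tanh) approximation rates for Lipschitz maps on compact subsets of $\mathbb{R}^m$. The branch represents a Lipschitz map $\mathbb{R}^m \to \mathbb{R}^p$ with $m \sim 1/\epsilon$, and the known complexity bound requires width $\times$ depth $\sim \epsilon^{-m}$, which, after substituting $m \sim 1/\epsilon$, gives the stated $N_{g^\mathcal{N}} \cdot L_{g^\mathcal{N}} \sim (1/\epsilon)^{1/\epsilon}$. The trunk network only has to approximate the finitely many smooth basis functions on $[0,1]$; for such analytic targets standard deep-net rates yield a parameter count of order $(\log(1/\epsilon))^2$, matching the stated bound $|\theta^{(k)}| \sim (\tfrac{3}{2}\log(1/\epsilon))^2$.

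The main obstacle is the branch bound: because the branch input dimension $m$ itself scales as $1/\epsilon$, naive Lipschitz-network approximation incurs a genuine curse of dimensionality, producing the super-polynomial $(1/\epsilon)^{1/\epsilon}$ factor. One cannot soften this without either exploiting extra structure of $\mathcal{K}$ beyond Lipschitzness, or restricting attention to a lower-dimensional subset of admissible $\beta$'s. The detailed bookkeeping of how the three error pieces compose, and verification that $B$-dependence enters only through the Lipschitz constant $e^{3B}$ (so that the stated bounds survive uniformly over $\|\beta\|_\infty \le B$), is routine but must be executed carefully; everything else follows directly from the cited Theorem~3.3 and Remark~3.4 in~\cite{lu2021advectionDeepONet} together with our Lemma~\ref{lemma:holder}.
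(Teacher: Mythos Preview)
The paper does not prove this proposition at all. It is stated as a direct citation of Theorem~3.3 and Remark~3.4 of \cite{lu2021advectionDeepONet}, and the text moves immediately on to Example~\ref{example1}. There is therefore no ``paper's own proof'' to compare against; the proposition is imported wholesale from the external reference.

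Your sketch is a plausible reconstruction of the framework in that reference: the encoding/approximation/reconstruction decomposition is precisely the scheme of \cite{lu2021advectionDeepONet} (and of \cite{lanthaler2022errorDeeponet}), your derivation of $m\sim\epsilon^{-1}$ from Lemma~\ref{lemma:holder} is correct, and the curse-of-dimensionality count $N_{g^{\mathcal N}}\cdot L_{g^{\mathcal N}}\sim (1/\epsilon)^{1/\epsilon}$ for the branch network follows exactly as you say once $m\sim 1/\epsilon$. One step is internally inconsistent, however: you write that ``truncation then decays as $p^{-1/2}$ \ldots\ giving $p\sim\epsilon^{-1/2}$.'' If the truncation error scales as $p^{-\alpha}$, then $p\sim\epsilon^{-1/\alpha}$; a $p^{-1/2}$ rate would force $p\sim\epsilon^{-2}$, not $\epsilon^{-1/2}$. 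To obtain $p\sim\epsilon^{-1/2}$ you need error $\sim p^{-2}$, which a single gained derivative (your ``one additional derivative'' from Volterra smoothing) does not deliver via standard Jackson estimates---that gives only $\sim p^{-1}$. The $p\sim\epsilon^{-1/2}$ scaling in \cite{lu2021advectionDeepONet} comes from a specific choice of reconstruction map analyzed there, not from a generic smoothness heuristic, so you should consult the cited Remark~3.4 directly rather than rely on the regularity argument you give.
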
  

\begin{example} \label{example1}
In Figure \ref{fig:1} we present two examples of approximation of 
$k$ using a DeepONet approximation of ${\cal K}(\beta)$ for given $\beta_1$ and $\beta_2$, 
which are taken as  Chebyshev polynomials  $\beta(x) = 6 \cos (\gamma \cos^{-1}(x))$.
They are trained on approximating kernels from 900 samples with $\gamma \in \text{uniform}[2, 8]$.
\end{example}

\section{Stability under Kernel Approximation with DeepONet}
\label{sec-stabilityDeepONet}

For our stability study under an approximate (imperfect) kernel, we begin with a derivation of the target PDE system under a backstepping transformation employing a DeepONet approximation of the backstepping kernel. 
 
For a given $\beta$, let $\hat k = \hat{\mathcal{K}} (\beta)$, where $\hat{\mathcal{K}} = \mathcal{K}_\mathbb{N}$, denote an NO approximation of the exact backstepping kernel $k$ whose existence is established in Corollary \ref{cor-DeepONet} for DeepONet. Let 
\begin{equation}
    \tilde k = k-\hat k
\end{equation}
denote the approximation error. Finally, let the backstepping transformation with the approximate kernel $\hat k$ be 
\begin{eqnarray}\label{eq-bksttrans}
    \hat{w} = u - \hat{k} * u.
\end{eqnarray}

With routine calculations, employing the approximate backstepping transformation and the feedback 
\begin{equation}\label{eq-khatfbk}
    U = (\hat k \ast u) (1)
\end{equation}
we arrive at the target system 
\begin{eqnarray}\label{eq-what-target1}
    \hat{w}_t &=& \hat{w}_x + \delta\hat{w}(0)
    \\  \label{eq-what-target2}
\hat w(1) &=&0, 
\end{eqnarray}
where the function $\delta(x)$ is defined as
\begin{equation}\label{eq:2.30}
    \delta = -\tilde k + \beta\ast\tilde k. 
\end{equation}


Next, we proceed with a Lyapunov analysis.

\begin{lemma}{\em (a Lyapunov estimate).}
\label{lem-Lyap}
Given arbitrarily large $B>0$, for all Lipschitz $\beta$ with $\|\beta\|_\infty \leq B$, and for all neural operators $\hat{\cal K}$ with $\epsilon \in (0, \epsilon^*)$, where
\begin{equation} \label{eq-eps*}
    \epsilon^*(B) = {c {\rm e}^{-c/2}\over 1+B}
\end{equation}
the Lyapunov functional
\begin{equation}
    V(t) = \int_0^1 {\rm e}^{cx} \hat w^2(x,t) dx, \qquad c>0.
\end{equation}
satisfies the following estimate along the solutions of the target system \eqref{eq-what-target1}, \eqref{eq-what-target2},
\begin{equation}
    V(t) \leq V(0) {\rm e}^{-c^*t}, 
\end{equation}
for  
\begin{equation}\label{eq-c*}
    c^* =c- \frac{e^c}{c}\epsilon^2 \left(1+ B \right)^2 >0. 
\end{equation} 
The  accuracy required of the NO $\hat{\cal K}$, and given by \eqref{eq-eps*}, is maximized with $c=2$ and has the value $\epsilon^*(B) = \frac{2}{{\rm e}\left(1+B\right)}$.
\end{lemma}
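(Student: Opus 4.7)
The plan is to run a standard backstepping Lyapunov argument on the target system \eqref{eq-what-target1}--\eqref{eq-what-target2}, with the only subtlety being the careful treatment of the perturbation term $\delta \hat w(0)$ and the translation of the NO accuracy $\epsilon$ into an $L^\infty$ bound on $\delta$. I would begin by differentiating $V(t)$ along trajectories, substituting $\hat w_t = \hat w_x + \delta \hat w(0)$, and splitting the result into a ``transport part'' and a ``perturbation part''. For the transport part $2\int_0^1 e^{cx}\hat w\hat w_x\,dx$, integration by parts combined with the boundary condition $\hat w(1,t)=0$ produces the identity $-\hat w(0,t)^2 - cV(t)$, delivering the desired negative $-cV$ and a boundary term that I will need to absorb.

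Next, I would estimate the perturbation part $2\hat w(0,t)\int_0^1 e^{cx}\hat w\,\delta\,dx$. Young's inequality $2ab\le a^2+b^2$ applied with $a=\hat w(0,t)$ cancels exactly the $-\hat w(0,t)^2$ boundary contribution. The remaining square is then bounded by Cauchy--Schwarz in the weighted $L^2$ inner product, yielding
\begin{equation}
\left(\int_0^1 e^{cx}\hat w\,\delta\,dx\right)^2 \le V(t)\int_0^1 e^{cx}\delta(x)^2\,dx \le V(t)\,\frac{e^c}{c}\,\|\delta\|_\infty^2.
\end{equation}

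The crucial translation step is to bound $\|\delta\|_\infty$ using \eqref{eq:2.30}. Since $\delta=-\tilde k+\beta\ast\tilde k$, the convolution estimate $|(\beta\ast\tilde k)(x)|\le \|\beta\|_\infty\|\tilde k\|_\infty$ for $x\in[0,1]$ together with Corollary~\ref{cor-DeepONet}, which provides $\|\tilde k\|_\infty<\epsilon$, gives $\|\delta\|_\infty \le (1+\|\beta\|_\infty)\epsilon \le (1+B)\epsilon$. Substituting into the previous inequality yields
\begin{equation}
\dot V(t) \le -\left(c-\frac{e^c}{c}\epsilon^2(1+B)^2\right)V(t) = -c^*V(t),
\end{equation}
with $c^*$ as in \eqref{eq-c*}; then $V(t)\le V(0)e^{-c^*t}$ follows by Gr\"onwall (or direct integration).

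Finally, I would compute the admissibility condition $c^*>0$, which rearranges to $\epsilon<ce^{-c/2}/(1+B)=\epsilon^*(B)$, matching \eqref{eq-eps*}. To finish, I maximize the permissible accuracy threshold over the free parameter $c>0$: the scalar function $c\mapsto ce^{-c/2}$ is maximized at $c=2$, where it equals $2/e$, so the optimal choice $c=2$ gives $\epsilon^*(B)=\tfrac{2}{e(1+B)}$. The main obstacle I anticipate is bookkeeping in the perturbation estimate---specifically, splitting the cross term via Young's inequality with the precise coefficient that lets the $-\hat w(0,t)^2$ boundary term be absorbed cleanly, rather than leaving a leftover that would couple to $V$ through an undesirable trace inequality; everything else is mechanical once that absorption is set up.
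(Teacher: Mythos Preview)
Your proposal is correct and follows essentially the same approach as the paper: differentiate $V$, integrate the transport part by parts to get $-\hat w(0)^2-cV$, absorb the cross term $2\hat w(0)\int e^{cx}\delta\hat w\,dx$ via Young's inequality and bound the remaining square by Cauchy--Schwarz, then use $\|\delta\|_\infty\le(1+B)\epsilon$ from \eqref{eq:2.30} and Corollary~\ref{cor-DeepONet} to obtain $\dot V\le -c^*V$. The only cosmetic difference is that the paper uses a slightly different Young splitting that retains a leftover $-\tfrac12\hat w(0)^2$ term (which is then discarded), whereas you absorb it exactly; the resulting $c^*$ and $\epsilon^*$ are identical.
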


\begin{proof}
Several steps of calculation (chain rule, substitution, integration by parts) result in 
\begin{eqnarray}
    \dot V &=& - \hat w^2(0) - c \int_0^1 {\rm e}^{cx} \hat w^2(x,t) dx 
    \nonumber \\
    &&
    + \hat w(0) \int_0^1 \delta(x) {\rm e}^{cx} \hat w(x) dx
    \nonumber \\
    &\leq &
    -{1\over 2 }w^2(0) - c \int_0^1 {\rm e}^{cx} \hat w^2(x,t) dx 
    \nonumber\\
    &&
    + \left(\int_0^1 \delta(x) {\rm e}^{cx} \hat w(x) dx\right)^2
\end{eqnarray}
With the Cauchy-Schwartz inequality
\begin{eqnarray}
    &&
    \left(\int_0^1 \delta(x) {\rm e}^{cx} \hat w(x) dx\right)^2 \nonumber \\ 
    &&
    \leq \int_0^1 \delta^2(x) {\rm e}^{cx} dx\int_0^1 {\rm e}^{cx} \hat w(x)^2 dx
\end{eqnarray}
we get
\begin{eqnarray}
    \dot V 
    & \leq &
    -{1\over 2 }w^2(0) - \left(c- \int_0^1 \delta^2(x) {\rm e}^{cx} dx\right)V
\end{eqnarray}
The function $\delta$ in \eqref{eq:2.30} is bounded by
$|\delta(x)| \leq \left(1+||\beta||_\infty\right)||\tilde{k}||_\infty  $ 
which, in turn, using \eqref{eq:2.14}, yields 
\begin{equation}\label{eq-deltabar}
    |\delta(x)| \leq (1+  \bar{\beta})\epsilon =: \bar\delta.
\end{equation}
Then, substituting this into (37), we obtain:
\begin{eqnarray}
    \dot V 
    &\leq &
    -{1\over 2 }w^2(0) - \left(c- \epsilon^2\left(1 +  \bar{\beta}\right)^2\int_0^1  {\rm e}^{cx} dx\right)V 
    \nonumber\\ 
     &\leq &
    -{1\over 2 }w^2(0) - \left(c- \frac{e^c}{c}\epsilon^2 \left(1+ \bar{\beta} \right)^2 \right)V
     \nonumber\\ 
     &\leq &
    -{1\over 2 }w^2(0) - \left(c- \frac{e^c}{c}\epsilon^2 \left(1+ B \right)^2 \right)V
\end{eqnarray}
For $0\leq \epsilon \leq \epsilon^*$, where $\epsilon^*$ is defined in \eqref{eq-eps*}, 
we have
\begin{eqnarray}
    \dot V 
    \leq 
    -{1\over 2 }w^2(0) - c^* V 
\end{eqnarray}
for some $c^*>0$ in \eqref{eq-c*}. 
\end{proof}

The size of the NO and of the dataset needs to increase with $\bar\beta$, i.e., with the potential instability in the open-loop system. 

\begin{lemma}{\em (bound on inverse approximate kernel).}
\label{eq-Lyapsandwich}
The kernel $\hat l$ of the inverse to the  backstepping transformation \eqref{eq-bksttrans},
\begin{eqnarray}
    u = \hat{w} + \hat{l} \ast \hat w,
\end{eqnarray}
satisfies, for all $x\in[0,1]$, the estimate
\begin{equation}\label{eq-lhatestimate}
    |\hat l(x)| \leq \left(\bar\beta + (1+  \bar{\beta})\epsilon \right) {\rm e}^{(1+  \bar{\beta})\epsilon  x}.
\end{equation}
\end{lemma}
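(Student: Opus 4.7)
The plan is to derive a linear Volterra integral equation of the second kind for $\hat l$ whose forcing term involves only $\beta$ and the small perturbation function $\delta$ from \eqref{eq:2.30}, and then apply the scalar Gronwall inequality. The algebraic reduction is what matters: the target bound \eqref{eq-lhatestimate} has exponential rate $(1+\bar\beta)\epsilon$, so the Volterra equation derived for $\hat l$ must have $\delta$ (and not $\hat k$) appearing inside the integral kernel. Since the direct transformation \eqref{eq-bksttrans} and its inverse compose to the identity as Volterra operators on $[0,1]$, the operator identity $(I-\hat k\ast)(I+\hat l\ast)=I$ holds. Combining $k=-\beta+\beta\ast k$ with $\tilde k=k-\hat k$ and the definition \eqref{eq:2.30} rearranges immediately to
\begin{equation}
\hat k=-\beta+\delta+\beta\ast\hat k,
\end{equation}
i.e., $(I-\beta\ast)\hat k=-\beta+\delta$. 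Using commutativity of Volterra convolution, a short computation produces the operator identity $(I-\hat k\ast)(I-\beta\ast)=I-\delta\ast$. Composing on the left with $(I-\hat k\ast)^{-1}=I+\hat l\ast$ and expanding $(I+\hat l\ast)(I-\delta\ast)=I-\beta\ast$ then yields the clean Volterra equation
\begin{equation}
\hat l=-\beta+\delta+\hat l\ast\delta.
\end{equation}

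Given this equation, the estimate is immediate. Taking absolute values, using $\|\beta\|_\infty\leq\bar\beta$, and applying $\|\delta\|_\infty\leq(1+\bar\beta)\epsilon=:\bar\delta$ from \eqref{eq-deltabar} yields
\begin{equation}
|\hat l(x)|\leq \bar\beta+\bar\delta+\bar\delta\int_0^x |\hat l(y)|\,dy,
\end{equation}
to which the scalar Gronwall inequality applies, producing $|\hat l(x)|\leq(\bar\beta+\bar\delta)\,e^{\bar\delta x}$. Substituting $\bar\delta\leq(1+\bar\beta)\epsilon$ delivers exactly \eqref{eq-lhatestimate}.

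The main obstacle is the algebraic reduction in the first paragraph: one has to identify the right composition of operators to express $\hat l$ as the solution of a Volterra equation whose integral kernel is the small perturbation $\delta$ rather than $\hat k$ itself. The naive equation $\hat l=\hat k+\hat k\ast\hat l$, read off directly from the inverse transformation, would only give a bound of order $e^{\|\hat k\|_\infty x}$, which is neither sharp nor consistent with the exact limit $l=-\beta$ that arises when $\epsilon\to 0$. Passing through the identity $(I-\hat k\ast)(I-\beta\ast)=I-\delta\ast$ localizes the dependence on the approximation error in the right place and yields the tight exponential rate; once that identity is in hand, the rest is a single application of Gronwall.
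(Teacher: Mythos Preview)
Your proof is correct and follows essentially the same approach as the paper: you derive the same Volterra equation $\hat l=-\beta+\delta+\delta\ast\hat l$ (the paper states this without derivation, writing only ``It is easily shown that $\hat l$ obeys the integral equation\ldots''), and then bound $|\hat l|$ by $(\bar\beta+\bar\delta)e^{\bar\delta x}$. The only cosmetic differences are that you spell out the operator-algebraic route to the integral equation and invoke Gronwall where the paper cites successive approximation; the content is identical.
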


\begin{proof}
It is easily shown that $\hat l$ obeys the integral equation
\begin{equation}
    \hat l = -\beta +\delta + \delta\ast\hat l.
\end{equation}
Using the successive approximation approach, we get that the following bound
holds for all $x\in[0,1]$:
\begin{equation}
    |\hat l(x)| \leq \left(\bar\beta + \bar\delta\right) {\rm e}^{\bar\delta x}.
\end{equation}
With \eqref{eq-deltabar}, we  get \eqref{eq-lhatestimate}.
\end{proof}

\begin{theorem}{\em (Closed-loop stability robust to DeepONet approximation of backstepping kernel).}
\label{thm-stabDeepONet}
Let $B>0$ be arbitrarily large and consider the closed-loop system consisting of \eqref{eq-PDE}, \eqref{eq-PDEBC} with any Lipschitz $\beta$ such that $\|\beta\|_\infty \leq B$, and the feedback \eqref{eq-khatfbk} with the NO gain kernel $\hat k = \hat{\cal K}(\beta)$ of arbitrary desired accuracy of approximation $\epsilon\in (0,\epsilon^*)$ in relation to the exact backstepping kernel $k$, where $\epsilon^*(B)$ is defined in \eqref{eq-eps*}. This closed-loop system obeys the exponential stability estimate
   \begin{equation}\label{eq-esestimate}
    \|u(t)\| \leq M {\rm e}^{-c^*t/2} \|u(0)\|, \qquad \forall t\geq 0
\end{equation}
with the overshoot coefficient
\begin{equation}
 M=\left(1+\left(\bar\beta + (1+  \bar{\beta})\epsilon \right) {\rm e}^{(1+  \bar{\beta})\epsilon } \right) \left(1+\bar\beta {\rm e}^{\bar\beta} \right) {\rm e}^{c/2}.
\end{equation}
\end{theorem}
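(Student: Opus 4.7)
The plan is to pipeline the exponential decay from Lemma~\ref{lem-Lyap}, which is expressed in the target variable $\hat w$, back to the physical state $u$, by means of the approximate direct transformation \eqref{eq-bksttrans} and its inverse. The assumption $\epsilon\in(0,\epsilon^*)$ lets us invoke Lemma~\ref{lem-Lyap} directly, producing $V(t)\le V(0)\,{\rm e}^{-c^*t}$. Combined with the sandwich $\|\hat w(t)\|^2\le V(t)\le {\rm e}^{c}\|\hat w(t)\|^2$, this already yields $\|\hat w(t)\|\le {\rm e}^{c/2}{\rm e}^{-c^*t/2}\|\hat w(0)\|$, which is the core estimate I will then transport to the $u$-norm.

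Next I would bound $\|u(t)\|$ above by $\|\hat w(t)\|$ using the inverse approximate backstepping transformation $u=\hat w+\hat l\ast\hat w$, giving $\|u(t)\|\le (1+\|\hat l\|_\infty)\|\hat w(t)\|$; here I substitute the explicit bound \eqref{eq-lhatestimate} from Lemma~\ref{eq-Lyapsandwich}, which produces the factor $1+(\bar\beta+(1+\bar\beta)\epsilon)\,{\rm e}^{(1+\bar\beta)\epsilon}$ of the overshoot constant $M$. In the opposite direction, I would bound $\|\hat w(0)\|$ above by $\|u(0)\|$ using the forward transformation $\hat w=u-\hat k\ast u$, so that $\|\hat w(0)\|\le(1+\|\hat k\|_\infty)\|u(0)\|$; the bound \eqref{eq-kbound} on the exact kernel, together with $\|\tilde k\|_\infty\le\epsilon$ from Corollary~\ref{cor-DeepONet}, controls $\|\hat k\|_\infty\le \bar\beta\,{\rm e}^{\bar\beta}+\epsilon$, which (after absorbing the $\epsilon$ into the small-$\epsilon$ regime or carrying it explicitly) supplies the second factor $1+\bar\beta\,{\rm e}^{\bar\beta}$ of $M$, while the intermediate $V$-sandwich contributes the $\,{\rm e}^{c/2}$.

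Chaining these inequalities gives the desired estimate \eqref{eq-esestimate} with the stated overshoot
\begin{equation*}
    M=\bigl(1+(\bar\beta+(1+\bar\beta)\epsilon)\,{\rm e}^{(1+\bar\beta)\epsilon}\bigr)\bigl(1+\bar\beta\,{\rm e}^{\bar\beta}\bigr)\,{\rm e}^{c/2},
\end{equation*}
and with decay rate $c^*/2$ inherited from Lemma~\ref{lem-Lyap}. The main obstacle is essentially bookkeeping: ensuring that the approximate transformation and its inverse are both well-defined (the inverse requires the integral equation for $\hat l$ to have a solution on $[0,1]$, which is handled in Lemma~\ref{eq-Lyapsandwich} via successive approximation) and that all the constants depend on $B$ rather than on the specific $\beta$, so that the estimate is uniform over the admissible class $\|\beta\|_\infty\le B$. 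Since Lemmas~\ref{lem-Lyap} and~\ref{eq-Lyapsandwich} have already produced the $B$-uniform bounds needed, the remaining work is a short, purely algebraic assembly.
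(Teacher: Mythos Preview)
Your proposal is correct and follows essentially the same approach as the paper: the paper also invokes Lemma~\ref{lem-Lyap} for the decay of $V$, uses the sandwich \eqref{eq-Vsandwich} (which is exactly your $V$--$\hat w$ sandwich combined with the forward and inverse transformation bounds), and then substitutes the bounds \eqref{eq-khatinf}, \eqref{eq-lhatinf} on $\|\hat k\|_\infty$ and $\|\hat l\|_\infty$ to obtain $M$. Your observation that the $+\epsilon$ in $\|\hat k\|_\infty\le\bar\beta{\rm e}^{\bar\beta}+\epsilon$ is dropped in the stated $M$ matches what the paper does silently.
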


\begin{proof}
First, we note that $V$ 
Lemma \ref{lem-Lyap} satisfies
\begin{equation}\label{eq-Vsandwich}
    {1\over \left(1+\|\hat l\|_\infty \right)^2}\|u\|^2 \leq V \leq {\rm e}^c \left(1+\|\hat k\|_\infty \right)^2 \|u\|^2. 
\end{equation}
Since, by Lemma~\ref{lem-Lyap}, $V(t) \leq V(0) {\rm e}^{-c^*t}$, we get, for all $t\geq 0$, 
\begin{eqnarray}
    \|u(t)\| &\leq& \left(1+\|\hat l\|_\infty \right) \left(1+\|\hat k\|_\infty \right){\rm e}^{c/2}
    \nonumber \\
    && 
    \times  {\rm e}^{-c^*t/2} \|u(0)\|.
\end{eqnarray}
Then, noting, with Theorem \ref{thm-DeepONet}, \eqref{eq-kbound}, and Lemma \ref{eq-Lyapsandwich} that 
\begin{eqnarray}\label{eq-khatinf}
    \|\hat k\|_\infty &\leq & \|k\|_\infty +\epsilon \leq \bar\beta {\rm e}^{\bar\beta} +\epsilon
    \\  \label{eq-lhatinf}
    \|\hat l\|_\infty &\leq & \left(\bar\beta + (1+  \bar{\beta})\epsilon \right) {\rm e}^{(1+  \bar{\beta})\epsilon }
\end{eqnarray}
we finally arrive at the exponential stability estimate \eqref{eq-esestimate}. 
\end{proof}

\begin{remark}
\label{rem-observer}
Full-state measurement $u(x,t)$ is employed in the feedback law \eqref{eq-khatfbk} but can be avoided by employing only the measurement of the outlet signal, $u(0,t)$, from which the full state $u(x,t)$ is observable, the observer
\begin{eqnarray}\label{exp-obsPDE}
    \breve u_t &=& \breve u_x +\beta u(0)\\
    \label{exp-obsPDEBC}
    \hat u(1) &=& U
\end{eqnarray}
and the observer-based controller
\begin{equation}
    U=(\hat k \ast \breve u)(1),
\end{equation}
which can avoid solving the PDE \eqref{exp-obsPDE}, \eqref{exp-obsPDEBC} online by employing its explicit solution as an arbitrary function $\breve u(x,t) = \breve u_0(x)$ for $t+x \in [0,1)$ and
\begin{equation}
\breve u(x,t) = U(t+x-1) + \int_{t+x-1}^t \beta(t+x-\tau) u(0,\tau) d\tau
\end{equation}
for $t+x\geq 1$.
A closed-loop stability result as in Theorem \ref{thm-stabDeepONet} can be established for this observer-based controller. 
\end{remark}

\begin{figure*}[t]
    \includegraphics[width=\textwidth]{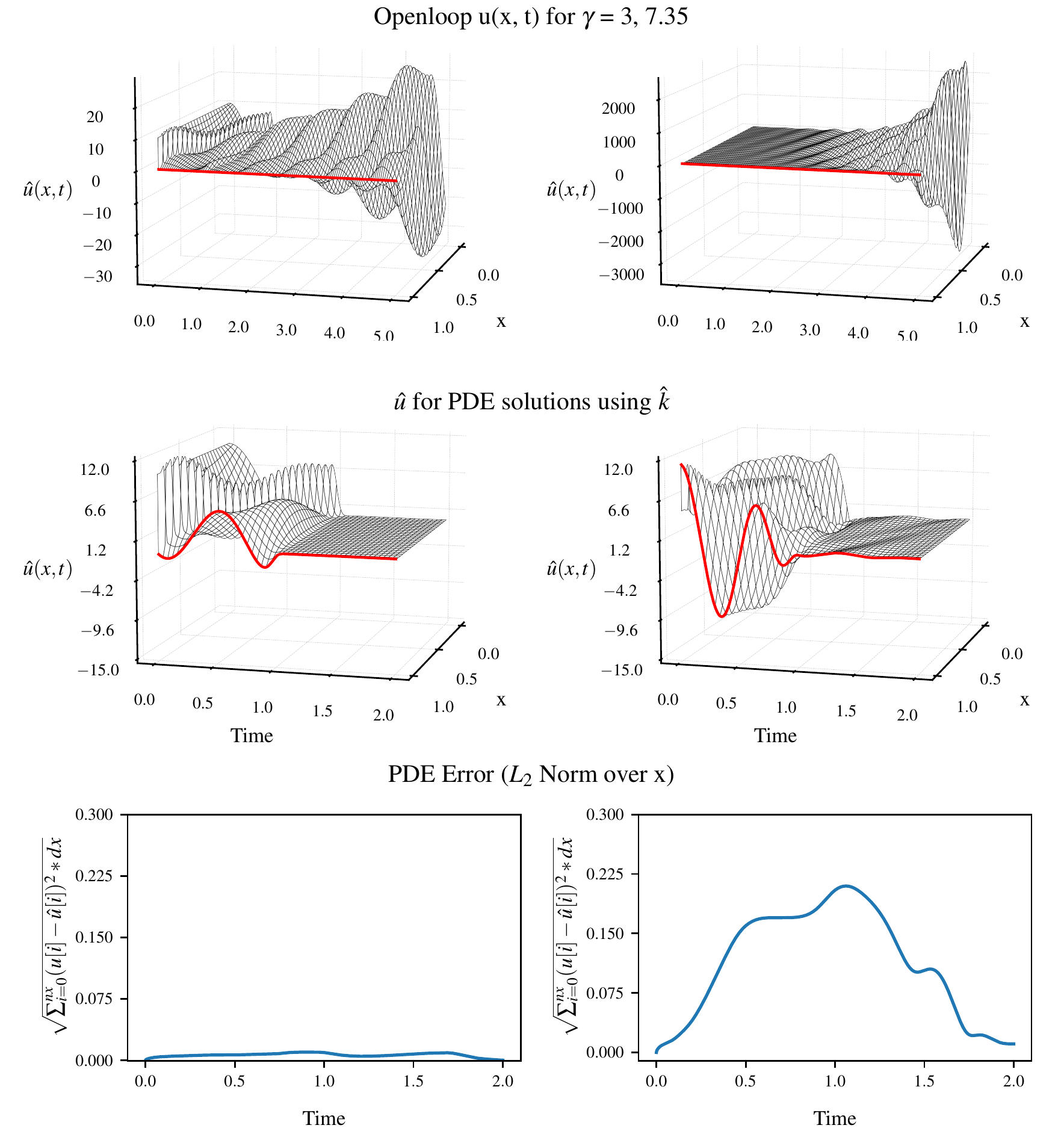} 
	\caption{Top row showcases open-loop instability for the recirculation functions  $\beta$ that are the same as in Fig. \ref{fig:1}, with $\gamma=3, 7.35$ on the left and right respectively. Additionally, the bottom two rows highlight examples of PDE closed-loop state response and errors between the response with ``perfect gain'' $k$ and ``approximate gain'' $\hat{k}$. $\beta$ corresponds to the same values in Figure \ref{fig:1}. For the more ``fluctuating'' plant parameter $\beta$, on the right of Figure \ref{fig:1}, the control task is more challenging and, consequently, the state approximation error is also higher (bottom right). }
    \label{fig:3}
\end{figure*}

\section{Simulations: Stabilization with NO-Approximated Gain Kernel $\beta\mapsto {\cal K}(\beta)$}
\label{sec-simulations}

Continuing with Example \ref{example1}, in Figure \ref{fig:3} we show that the system is open-loop unstable for both $\beta$s and we present tests with the learned kernels in closed-loop simulations up to $t=2$. In both cases, the PDE settles (nearly perfectly) by  $t=1,$ as expected from the target system with the perfect kernel $k$. The small ripple in the right simulation is due to the use of the approximated kernel $\hat k$. The simulations confirm the theoretical  guarantee that an NO-approximated kernel can successfully emulate a backstepping kernel while maintaining stability. 

 The NO architecture in $\hat{\cal K}$ consists of about 680 thousand parameters with a training time of $1$ minute (using an Nvidia RTX 3090Ti GPU) on a dataset of $900$ different $\beta$ defined as the Chebyshev polynomials $\beta = 6 \cos (\gamma \cos^{-1}(x))$ where $\gamma \sim \text{uniform(2, 10)}$. We choose $\beta$ of this form due to the rich set of PDEs and kernel functions constructed by varying only a single parameter. The resulting training relative $L_2$ error $4e-3$ and the testing relative $L_2$ loss on $100$ instances sampled from the same distribution was $5e-3$. If a wider distribution of $\gamma$ is chosen, the mapping can be learned but requires both a larger network and more data for the same accuracy. 
 
\section{Approximating the Full Feedback Law Map $(\beta,u) \mapsto U$}
\label{sec-beta,u->u}

We have so far pursued only the  approximation of  operator ${\cal K}(\beta)$, while treating the feedback operator \eqref{sec-perfectfbk}, given by $U = (k\ast u)(1)=({\cal K}(\beta)\ast u)(1)$, as straightforward to compute---merely an integral in $x$, i.e., a simple inner product between the functions ${\cal K}(\beta)(1-x)$ and the state measurement $u(x,t)$. 

It is of theoretical (if not practical) interest
to explore the neural approximation of the mapping from $(\beta,u)$ into the scalar control input $U$. Such a mapping is clearly from a much larger space of functions $(\beta,u)$ into scalars (i.e., the mapping is {\em functional}) and is, therefore, considerably more training-intensive and learning-intensive. Nevertheless, since it is legitimate to ask how one would approximate not just the feedback gain kernel but the entire feedback law map, we examine this option in this section. 

We emphasize that we are approximating just the feedback operator $({\cal K}(\beta)\ast u)(1)$, whose second argument is the current state $u$ as a function of $x$, not the entire trajectory $u(x,t)$. 
We do not train the NO using a trajectory-dependent cost 
$\int_0^{t_{\rm f}} \left(\int_0^1 u^2(x,t) dx + U^2(t)\right) dt $ for different initial conditions $u_0$, as, e.g., in the application of RL to the hyperbolic PDEs of traffic flow in \cite{9568241}. Instead, we perform the training simply on the kernel integral equation \eqref{eq:2.12} and the convolution operation \eqref{sec-perfectfbk} for sample functions $\beta$ and $u$ of $x$. 

The form of stability we achieve in this section is less strong than in Theorem \ref{thm-stabDeepONet}. While Theorem \ref{thm-stabDeepONet} guarantees global exponential stability, here we achieve only {\em semiglobal practical} exponential stability. Because in this section we do not just train a multiplicative gain ${\cal K}(\beta)$ but a feedback of $u$ as well, the approximation error is not just multiplicative but additive, which is the cause of the exponential stability being {\em practical}. Because the data set involves samples $u$ of bounded magnitude, stability is {\em semiglobal} only. 

Nevertheless, in comparison to the training on closed-loop solutions over a finite time horizon for the traffic flow in \cite{9568241}, where the finite horizon precludes the possibility of stability guarantees, the semiglobal practical exponential stability achieved here is a rather strong result.

We start by establishing the Lipschitzness of the backstepping feedback map.

\begin{lemma}Consider the feedback \eqref{sec-perfectfbk}, namely, 
\begin{equation}\label{sec-perfectfbk+}
    U = ({\cal K}(\beta) \ast u)(1),
\end{equation}
and the associated map ${\cal U}: (\beta,u) \mapsto U$ from $C^0([0,1]^2)$ into $\mathbb{R}$. 
For arbitrary  $B_\beta,B_u>0$, the mapping ${\cal U}$ is Lipschitz on any set of $x$-dependent Lipschitz functions $(\beta,u)$ such that $\|\beta\|_\infty\leq B_\beta, \|u\|_\infty \leq B_u$, with a Lipschitz constant 
\begin{equation}
    C_{\cal U} = B_\beta {\rm e}^{B_\beta} + B_u {\rm e}^{3B_\beta}. 
\end{equation}
\end{lemma}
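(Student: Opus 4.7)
The plan is to write $U_1 - U_2 = (\mathcal{K}(\beta_1)\ast u_1)(1) - (\mathcal{K}(\beta_2)\ast u_2)(1)$ and split it by the standard add--subtract trick so that one term isolates the $u$-variation at a fixed kernel, and the other isolates the $\beta$-variation at a fixed state. Concretely, with $k_i = \mathcal{K}(\beta_i)$, the decomposition
\begin{equation}
U_1 - U_2 = \int_0^1 k_1(1-y)\bigl[u_1(y)-u_2(y)\bigr] dy + \int_0^1 \bigl[k_1(1-y)-k_2(1-y)\bigr]u_2(y) dy
\end{equation}
reduces the problem to $L^\infty$ bounds on $k_1$, on $k_1-k_2$, and on $u_2$.

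Next, I would plug in the three ingredients already established in the paper. From the a priori kernel bound \eqref{eq-kbound} applied with $\bar\beta_1\leq B_\beta$ and $x=1$, I obtain $\|k_1\|_\infty \leq B_\beta {\rm e}^{B_\beta}$. From Lemma \ref{lemma:holder} with the identification $B=B_\beta$ (and the simple observation that the Lipschitz constant $e^{3B}$ works for any upper bound on both $\|\beta_1\|_\infty$ and $\|\beta_2\|_\infty$), I get $\|k_1-k_2\|_\infty \leq {\rm e}^{3B_\beta}\|\beta_1-\beta_2\|_\infty$. The hypothesis directly gives $\|u_2\|_\infty \leq B_u$. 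Applying these to the two integrals above and using the trivial $L^1$--$L^\infty$ estimate on $[0,1]$ yields
\begin{equation}
|U_1 - U_2| \leq B_\beta {\rm e}^{B_\beta}\|u_1-u_2\|_\infty + B_u {\rm e}^{3B_\beta}\|\beta_1-\beta_2\|_\infty,
\end{equation}
which, after grouping with the product-space norm $\|(\beta,u)\| := \max(\|\beta\|_\infty,\|u\|_\infty)$ (or the sum of norms), gives exactly the advertised Lipschitz constant $C_{\mathcal{U}} = B_\beta {\rm e}^{B_\beta} + B_u {\rm e}^{3B_\beta}$.

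There is no real obstacle here; the result is essentially a composition of two Lipschitz estimates already proven, namely the kernel magnitude bound \eqref{eq-kbound} and the kernel-operator Lipschitzness of Lemma \ref{lemma:holder}, with the bilinear continuity of the convolution-at-$x=1$ functional. The only modeling choice is which norm on the product space $C^0([0,1])\times C^0([0,1])$ to use, and this is cosmetic: the two summands naturally align with the two components, so either a sum norm or a $\max$ norm on $(\beta,u)$ recovers the stated $C_{\mathcal{U}}$ (with at most a factor of $2$ that can be absorbed).
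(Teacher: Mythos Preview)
Your proposal is correct and follows essentially the same approach as the paper: the paper uses the identical add--subtract decomposition $|U_1-U_2|\leq \|\mathcal{K}(\beta_1)\|_\infty\|u_1-u_2\|_\infty + \|u_2\|_\infty\|\mathcal{K}(\beta_1)-\mathcal{K}(\beta_2)\|_\infty$, then invokes the kernel bound \eqref{eq-kbound} and the Lipschitz estimate of Lemma~\ref{lemma:holder} to obtain the stated constant.
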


\begin{proof}
    Let $U_1 = {\cal U}(\beta_1,u_1) = ({\cal K}(\beta_1) \ast u_1)(1)$ and $U_2 = {\cal U}(\beta_2,u_2) = ({\cal K}(\beta_2) \ast u_2)(1)$. A  calculation gives
\begin{eqnarray}
   && |U_1-U_2| = |({\cal K}(\beta_1) \ast u_1)(1) - ({\cal K}(\beta_2) \ast u_2)(1)|
    \nonumber\\
    &&\leq \|{\cal K}(\beta_1)\|_\infty \|u_1-u_2\|_\infty
    + \|u_2\|_\infty \|{\cal K}(\beta_1)- {\cal K}(\beta_2) \|_\infty. 
    \nonumber \\
\end{eqnarray}
Let $\|\beta_1\|_\infty, \|\beta_2\|_\infty \leq B_\beta$ and $\|u_1\|_\infty, \|u_2\|_\infty \leq B_u$. Recall that $\|{\cal K}(\beta)\|_\infty \leq B_\beta {\rm e}^{B_\beta}$ and $\|{\cal K}(\beta_1)- {\cal K}(\beta_2) \|_\infty\leq {\rm e}^{3B_\beta}\|\beta_1-\beta_2\|_\infty$. Then we get 
\begin{eqnarray}
   && |{\cal U}(\beta_1,u_1) -{\cal U}(\beta_2,u_2)|
    \nonumber\\
    &&\leq 
    \left(B_\beta {\rm e}^{B_\beta} + B_u {\rm e}^{3B_\beta} \right)
    \|(\beta_1-\beta_2, u_1-u_2)\|_\infty. 
\end{eqnarray}
\end{proof}

Taking the backstepping transformation
$w = u - k\ast u$, where $k={\cal K}(\beta)$ is the exact backstepping kernel for $\beta$,
we get 
\begin{eqnarray}
	 w_t &= &w_x \\ 
	w(1) &=& U - ({\cal K}(\beta)\ast u)(1)
\end{eqnarray}
Let now $\hat{\cal U}$ be the NO version of the mapping ${\cal U}(\beta,u) = ({\cal K}(\beta)\ast u)(1)$. Taking the NO control $U=\hat{\cal U}(\beta,u)$, we obtain the boundary condition $w(1) = \hat{\cal U}(\beta,u) - ({\cal K}(\beta)\ast u)(1)$,  namely, the target system
\begin{eqnarray}
	 w_t &= &w_x \\ 
	w(1) &=& \hat{\cal U}(\beta,u) - {\cal U}(\beta,u)
\end{eqnarray}

Due to the Lipschitzness of ${\cal U}$, based on the DeepONet approximation accuracy theorem, we get the following. 

\begin{lemma}\label{lemma-Utilde}
    For all $B_\beta,B_u>0$ and  $\epsilon$, there exists an NO $\hat{\cal U}$ such that 
\begin{eqnarray}
|{\cal U}(\beta,u) - \hat{\cal U}(\beta,u)|<\epsilon
\end{eqnarray}
for all $\beta, u \in C^0[0,1]$ that are Lipschitz in $x$ and such that $\|\beta\|_\infty\leq B_\beta, \|u\|_\infty \leq B_u$. 
\end{lemma}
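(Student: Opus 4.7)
The plan is to invoke the DeepONet universal approximation theorem (Theorem \ref{thm-DeepONet}) applied to the feedback operator ${\cal U}: (\beta, u) \mapsto ({\cal K}(\beta)\ast u)(1)$, viewed as a continuous map from a compact subset of $C^0([0,1]) \times C^0([0,1])$ into $\mathbb{R}$. Two hypotheses must be verified before invoking the theorem: continuity of ${\cal U}$, and compactness of its input domain.

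For continuity, the immediately preceding Lipschitzness lemma does all the work: it establishes that ${\cal U}$ is Lipschitz on the set $\{(\beta,u):\|\beta\|_\infty\leq B_\beta,\ \|u\|_\infty \leq B_u\}$ with the explicit constant $C_{\cal U} = B_\beta {\rm e}^{B_\beta} + B_u {\rm e}^{3B_\beta}$, and Lipschitzness immediately implies continuity on this set with respect to the sup norm on the product space.

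For compactness, I would invoke the Arzel\`a-Ascoli theorem. A family of functions on $[0,1]$ that is uniformly bounded in sup norm and equicontinuous is precompact in $C^0([0,1])$, and a uniform Lipschitz bound provides exactly the equicontinuity needed. The natural reading of the lemma's hypothesis (``Lipschitz in $x$'') is that both $\beta$ and $u$ come from classes with some fixed uniform Lipschitz constants, a restriction that is automatically satisfied by any finite training set. The Cartesian product of two such compact sets is compact in $C^0([0,1])^2$.

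With continuity of ${\cal U}$ and compactness of the input domain established, Theorem \ref{thm-DeepONet} applies directly on the product domain (treating the pair $(\beta,u)$ as a single vector-valued input function sampled at a sufficient number of points) and yields, for any prescribed $\epsilon > 0$, a DeepONet $\hat{\cal U}$ achieving $|{\cal U}(\beta,u) - \hat{\cal U}(\beta,u)| < \epsilon$ uniformly on the compact domain. The main subtlety, and the only ``soft'' step in the argument, is this compactness interpretation; the heavy analytical lifting was already done in establishing the Lipschitz constant $C_{\cal U}$, so no genuinely new estimation beyond the preceding lemma and Theorem \ref{thm-DeepONet} is required.
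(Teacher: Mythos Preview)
Your proposal is correct and matches the paper's approach exactly: the paper's entire justification is the single sentence ``Due to the Lipschitzness of ${\cal U}$, based on the DeepONet approximation accuracy theorem, we get the following,'' invoking the preceding Lipschitzness lemma together with Theorem~\ref{thm-DeepONet}. You have simply supplied the details the paper leaves implicit, in particular the Arzel\`a--Ascoli step for compactness of the input class (which the paper handles identically, and equally tacitly, in Corollary~\ref{cor-DeepONet}).
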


Next, we state and then prove the main result.
\begin{theorem}{\em (Semiglobal practical stability under DeepONet approximation of backstepping feedback law).}\label{thm-semiglobalpractical}
    If $\epsilon <\epsilon^*$, where
\begin{equation}
    \epsilon^*(B_\beta,B_u,c) := {\sqrt{c} B_u\over {\rm e}^{c/2}  \left(1+ B_{\beta}\right)}\, >0,
\end{equation}
and $\|u(0)\| \leq B_u^0$, where
\begin{equation}\label{em-ROA-Ucal}
    B_u^0(\epsilon,B_\beta,B_u,c):= {1\over 1+B_\beta {\rm e}^{B_\beta}}
    \left({B_u\over {\rm e}^{c/2}  \left(1+ B_{\beta}\right)}-{\epsilon\over\sqrt{c}}
    \right)\, >0,
\end{equation}
the closed-loop solutions under the NO approximation of the PDE backstepping feedback law, i.e., 
\begin{eqnarray}
    \label{eq-PDEUcal}
	u_t(x,t) &=& u_x(x,t) + \beta(x) u(0, t) 
 \\  \label{eq-PDEBCUcal}
	u(1, t) &=& \hat{\cal U}(\beta,u)(t)
\end{eqnarray}
satisfy the  {\em semiglobal practical exponential stability} estimate
   \begin{eqnarray}\label{eq-esestimateUcal-final}
    \|u(t)\| &\leq& 
    \left(1+B_\beta \right) \left(1+B_\beta {\rm e}^{B_\beta} \right) {\rm e}^{c/2}{\rm e}^{-c t/2} \|u(0)\|
    \nonumber\\ 
    && +\left(1+ B_{\beta}\right)
    {{\rm e}^{c/2}\over \sqrt{c}} \epsilon
    , \qquad \forall t\geq 0. 
\end{eqnarray}
\end{theorem}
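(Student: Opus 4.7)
The plan is to apply the \emph{exact} backstepping transformation $w = u - k\ast u$ with $k = \mathcal{K}(\beta)$ to the closed-loop system \eqref{eq-PDEUcal}, \eqref{eq-PDEBCUcal}. Because $k$ is the exact kernel for $\beta$, the interior terms cancel cleanly and only the boundary picks up the approximation error, yielding the target system $w_t = w_x$, $w(1,t) = \hat{\mathcal U}(\beta,u)(t) - \mathcal{U}(\beta,u)(t)$. In contrast to Theorem~\ref{thm-stabDeepONet}, where the perturbation was a multiplicative gain error spread throughout the domain, here it is a purely \emph{additive} boundary perturbation governed by $\epsilon$; this structural difference is what forces the conclusion to be \emph{practical} (an $\epsilon$-offset) rather than asymptotic, and the restriction to a ball defined by the data budget $B_u$ is what makes it \emph{semiglobal}.

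Next I would establish two-sided norm equivalences between $u$ and $w$. One direction is straightforward from Young's convolution inequality combined with the kernel bound $\|\mathcal{K}(\beta)\|_\infty \le B_\beta\, {\rm e}^{B_\beta}$ (the estimate \eqref{eq-kbound} already used in Lemma~\ref{lemma:holder}), giving $\|w\| \le (1+B_\beta\, {\rm e}^{B_\beta})\|u\|$. For the reverse direction, the key observation is that the exact inverse of $u \mapsto u - k\ast u$ is simply $u = w - \beta \ast w$, which one verifies directly by substitution using the kernel identity $k = -\beta + \beta \ast k$ (equivalently, via Laplace transforms using Definition~\ref{def:kernel}). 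This gives the clean bound $\|u\| \le (1+B_\beta)\|w\|$ that accounts for the $(1+B_\beta)$ factor in the stated overshoot $M$ and threshold $\epsilon^\ast$.

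Then I would run a Lyapunov analysis with $V(t) = \int_0^1 {\rm e}^{cx}\, w^2(x,t)\, dx$, which is sandwiched as $\|w\|^2 \le V \le {\rm e}^c \|w\|^2$. A single integration by parts gives
\begin{equation*}
\dot V \;=\; {\rm e}^c\, w(1,t)^2 - w(0,t)^2 - cV.
\end{equation*}
Under the bootstrap assumption $\|u(\cdot,t)\|_\infty \le B_u$, Lemma~\ref{lemma-Utilde} bounds $|w(1,t)| < \epsilon$, producing $\dot V \le -cV + {\rm e}^c \epsilon^2$. The comparison lemma delivers $V(t) \le V(0)\, {\rm e}^{-ct} + ({\rm e}^c/c)\,\epsilon^2$, and taking square roots (with $\sqrt{a+b}\le \sqrt{a}+\sqrt{b}$) and chaining through the two norm equivalences together with the sandwich on $V$ assembles \eqref{eq-esestimateUcal-final} with exactly the stated constants.

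The main obstacle, and the step that deserves real care, is closing the bootstrap: one must ensure that $\|u(\cdot,t)\|_\infty \le B_u$ persists for all $t\ge 0$ so that Lemma~\ref{lemma-Utilde} remains applicable. The choice of $B_u^0$ in \eqref{em-ROA-Ucal} is precisely calibrated so that the right-hand side of \eqref{eq-esestimateUcal-final}, evaluated at $\|u(0)\| = B_u^0$ and at $t=0$ (the worst case, since the exponential term is monotonically decaying), equals exactly $B_u$. A standard continuity/contradiction argument then shows that the invariance is maintained on all of $[0,\infty)$; alternatively, since the target PDE is pure transport with characteristics propagating boundary data of size at most $\epsilon$ into the interior, one can obtain the $L^\infty$ version of the bound directly via the method of characteristics and then transfer it back to $u$ through $u = w - \beta\ast w$. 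The condition $\epsilon < \epsilon^\ast$ is exactly what makes $B_u^0 > 0$, so the region of attraction is nonempty and the proof is complete.
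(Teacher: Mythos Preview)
Your proposal is correct and follows essentially the same route as the paper: exact backstepping transform $w=u-k\ast u$, the perturbed target system $w_t=w_x,\ w(1)=\hat{\cal U}-{\cal U}$, the weighted Lyapunov functional $V=\int_0^1 {\rm e}^{cx}w^2\,dx$, the differential inequality $\dot V\le -cV+{\rm e}^c w^2(1)$, and the sandwich via the direct and inverse kernels. Your explicit identification of the inverse kernel as $l=-\beta$ (from $k=-\beta+\beta\ast k$) is exactly what underlies the paper's bound $\|l\|_\infty\le B_\beta$, and your discussion of the bootstrap needed to keep $\|u\|_\infty\le B_u$ is more careful than the paper's proof, which simply invokes Lemma~\ref{lemma-Utilde} and leaves the invariance implicit.
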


The estimate \eqref{eq-esestimateUcal-final} is semiglobal because the radius $B_u^0$ of the ball of initial conditions in $L^2[0,1]$ is made arbitrarily large by increasing $B_u$, and by increasing, in accordance with the increase of $B_u$, the training set size and the number of NN nodes. Nevertheless, though semiglobal, the attraction radius $B_u^0$ in \eqref{em-ROA-Ucal} is much smaller than the magnitude $B_u$ of the samples of $u$ in the training set. 

The residual value, 
\begin{equation}
    \lim\sup_{t\rightarrow} \|u(t)\| \leq \left(1+ B_{\beta}\right)
    {{\rm e}^{c/2}\over \sqrt{c}} \epsilon
\end{equation}
    is made arbitrarily small by decreasing $\epsilon$, and by increasing, in accordance with the decrease of $\epsilon$, the training set size and the number of NN nodes. As the magnitude $B_\beta$ of the (potentially destabilizing) gain samples $\beta$ used for training grows, the residual error grows. 

\begin{proof}{\em (of Theorem \ref{thm-semiglobalpractical})}
To make the notation concise, denote $\tilde{\cal U} = {\cal U} - \hat{\cal U}$ and note that this mapping satisfies $|\tilde{\cal U}(\beta,u)|= |w(1)|\leq \epsilon$ for all $\|\beta\|_\infty\leq B_\beta, \|u\|_\infty \leq B_u$. Note also that $\tilde{\cal U}$ depends on $\epsilon, B_\beta, B_u$ through the number of training data and NO size. 
Consider now the Lyapunov functional $
    V(t) = \int_0^1 {\rm e}^{cx}  w^2(x,t) dx$.
Its derivative is
\begin{eqnarray}
    \dot V &=& {\rm e}^{c} w^2(1) -  w^2(0) - c \int_0^1 {\rm e}^{cx}  w^2(x,t) dx
    \nonumber \\
    &\leq &  - c V + {\rm e}^{c} w^2(1)
\end{eqnarray}
which yields
\begin{eqnarray}
        V(t) &\leq& V(0) {\rm e}^{-ct}
        + {{\rm e}^{c}\over c} \sup_{0\leq\tau\leq t} w^2(1,\tau)
        \nonumber\\
        &\leq& V(0) {\rm e}^{-ct}
        + {{\rm e}^{c}\over c} \sup_{0\leq\tau\leq t} \left(\tilde{\cal U}(\beta,u)(\tau)\right)^2.
\end{eqnarray}
Using the facts that
\begin{equation}\label{eq-Vsandwich-k}
    {1\over \left(1+\| l\|_\infty \right)^2}\|u\|^2 \leq V \leq {\rm e}^c \left(1+\| k\|_\infty \right)^2 \|u\|^2. 
\end{equation}
and $\|k\|_\infty, \|l\|_\infty \leq B_{\beta} {\rm e}^{B_{\beta}}$, $\|l\|_\infty \leq B_{\beta} $
we get 
   \begin{eqnarray}\label{eq-esestimateUcal}
    \|u(t)\| &\leq& 
    \left(1+B_\beta \right) \left(1+B_\beta {\rm e}^{B_\beta} \right) {\rm e}^{c/2}{\rm e}^{-c t/2} \|u(0)\|
    \nonumber\\ 
    && +\left(1+ B_{\beta}\right)
    {{\rm e}^{c/2}\over \sqrt{c}} \sup_{0\leq\tau\leq t} \left|\tilde{\cal U}(\beta,u)(\tau)\right|.
\end{eqnarray}
The conclusions of the theorem are directly deduced from this estimate and the bound $|\tilde{\cal U}|<\epsilon$ in Lemma \ref{lemma-Utilde}.
\end{proof}

The NO $\hat{\cal U}: (\beta,u) \mapsto U$ is  complex, and therefore computationally burdensome in real time. Why not instead precompute the neural operator $\hat{\cal K}: \beta\mapsto \hat k$ and also find a DeepONet $\hat\Omega$ approximation of the {\em bilinear} map $\Omega : (k,u) \mapsto U$, which is simply the convolution $\Omega(k,u)(t) = \int_0^1 k(1-x) u(x,t) dy$, and then compute just $\hat\Omega(\hat k,u)(t)$ in real time, after computing $\hat k = \hat{\cal K}(\beta)$ offline? This is certainly possible. Why haven't we developed the theory for this approach? Simply because the theory for such a ``composition-of-operators'' approach, for $\hat\Omega(\hat{\cal K}(\beta),u)$, would be hardly any different, but just notationally more involved, than the theory that we provide here for the one-shot neural operator $\hat{\cal U}(\beta,u)$.

\section{Simulations: Practical Stabilization with NO-Approximated Feedback Law $(\beta,u)\rightarrow U$}
\label{sims-fbklawapprox}
Learning the map $(\beta, u) \mapsto U$ is harder than $\beta \mapsto k$ due to the combination of two functions, $\beta$, and $u$. We can learn the mapping using a training set defined by $\beta$ as in Figure \ref{fig:1} with $\gamma \in \text{uniform}(2, 6)$ and  random values of $u$. We present results with the learned mapping in Figure \ref{fig:4} where the learned control contains significant error. Due to this, we see that the PDE in the right of Figure \ref{fig:4} contains a significant ripple past the time $T=1$ whereas the analytically controlled PDE is stabilized, as stipulated by the target system, by $T=1$. When compared to the operator approximation for gain kernel in Figure \ref{fig:3} left, the PDE error is at least twice as large confirming the theoretical results in Theorem \ref{thm-semiglobalpractical} and Theorem \ref{thm-stabDeepONet}. 

Furthermore, the network architecture, as presented in Figure \ref{fig:5} requires significant enhancement over a traditional DeepONet. To learn this mapping, we emulate the operator structure where the map $(\beta, u)$ requires two DeepONet layers for the integral operators adjoined with linear layers for the multiplicative operation. Additionally, to make the network feasible, we use a smaller spatial resolution than in Section \ref{sec-simulations} and a larger dataset. The dataset requires a combination of both $\beta$ and $u$ and thus consists of 50000 instances. Therefore a network of approximately 415 thousand parameters takes approximately 20 minutes to train. We achieved a training relative $L_2$ error of $7.2e-3$ and a testing relative $L_2$ error of $3.3e-2$. This demonstrates, to the practical user, that the map $(\beta, u)$ requires more training data and significant architectural enhancements boosting training time, yet the error in Figure \ref{fig:4} is larger compared to employing the learned map $\beta \mapsto k$. 

\begin{figure*}[t]
    \includegraphics[width=\textwidth]{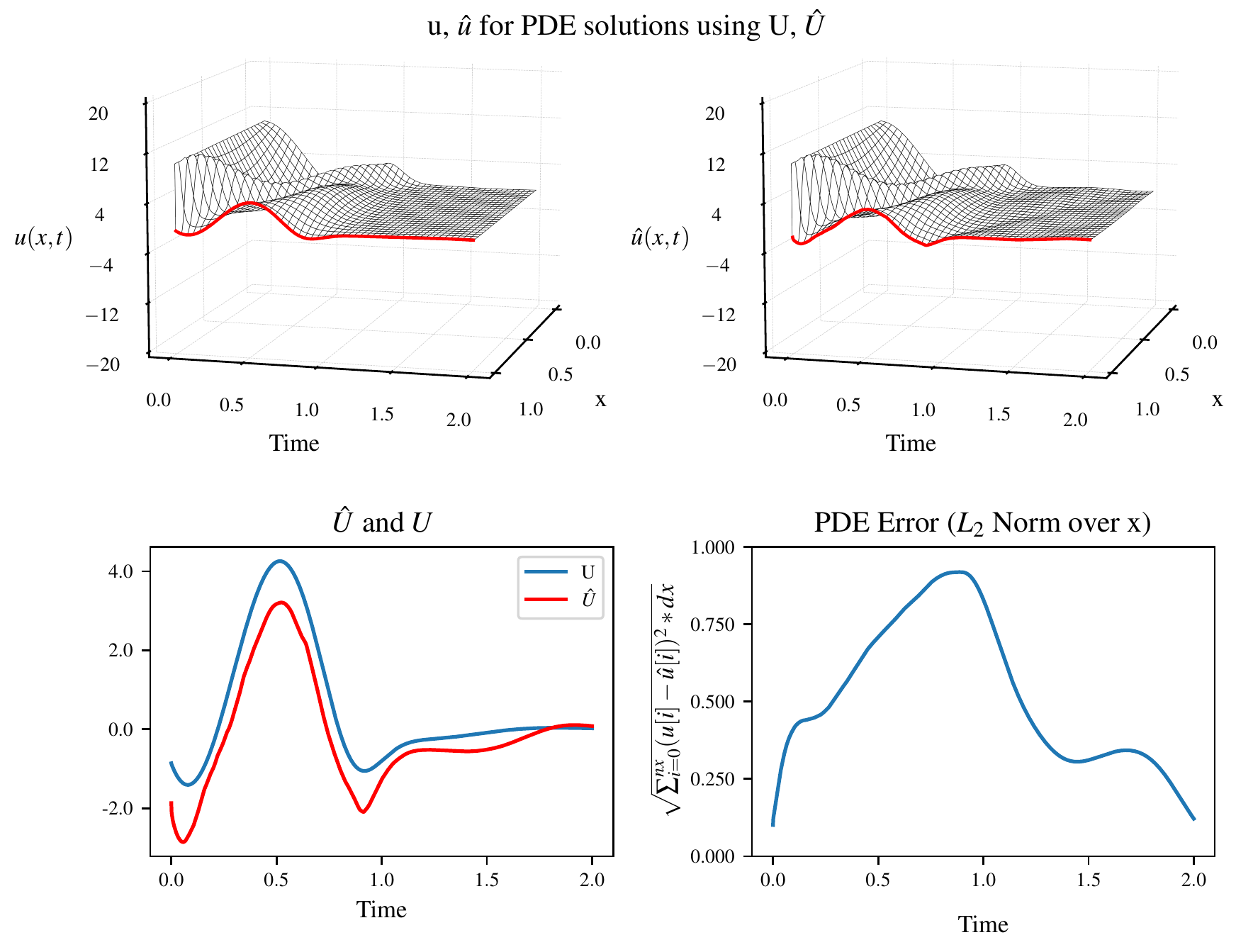} 
	\caption{Examples of PDE closed-loop state response and errors between the response with ``perfect control'' $U$ and ``approximate control'' $\hat{U}$. $\beta$ is same as in \ref{fig:1} with $\gamma=3$.}
    \label{fig:4}
\end{figure*}

\begin{figure*}[t]
    \includegraphics[width=\textwidth]{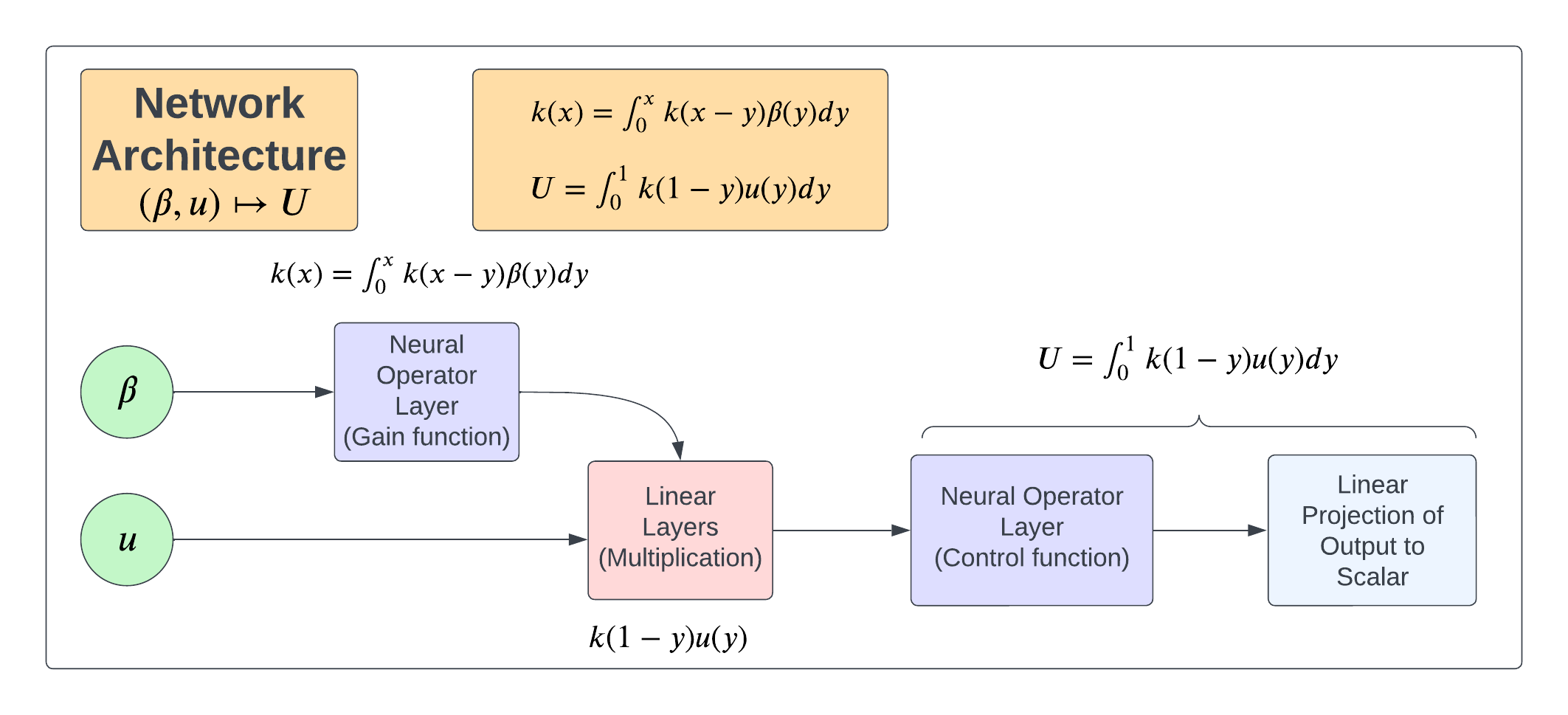} 
	\caption{Network architecture for the map $(\beta, u) \mapsto U$ presented in Section \ref{sec-beta,u->u}. The network first solves the kernel function using a DeepONet layer, then utilizes linear layers to multiply $k$ with the PDE state $u$, and concludes by learning a  second neural operator layer for the nonlinear integral operation yielding the final control output $U$. }
    \label{fig:5}
\end{figure*}

\section{Extension to Hyperbolic PIDEs}
\label{sec-PIDEextension}

We present the ``general case'' for a class of hyperbolic partial integro-differential equations (PIDE) of the form
\begin{eqnarray}
    \label{eq-PIDE}
	u_t(x,t) &=& u_x(x,t) + g(x) u(0,t) 
 \nonumber \\&&
 + \int_0^xf(x,y) u(y,t) dy , \qquad \mbox{$x\in[0,1)$}
 \\  \label{eq-PIDEBC}
	u(1, t) &=& U(t). 
\end{eqnarray}
We have left this generalization for the end of the paper for pedagogical reasons---in order not to overwhelm and daze the reader---since the case where the Volterra operator kernel $f(x,y)$ is a function of two variables complicates the treatment considerably. The backstepping transformation is no longer a convolution with a function of a single variable, the gain mapping is no longer of $C^0$ functions on $[0,1]$ but of $C^1$ functions on the triangle $\{0\leq y\leq x\leq 1\}$, and the DeepONet theorem requires estimates of the derivatives of the backstepping kernel. 

While \cite[(11)--(17)]{Bernard2014} shows that \eqref{eq-PDE}, \eqref{eq-PDEBC} can be transformed into \eqref{eq-PIDE}, \eqref{eq-PIDEBC}, this transformation involves a nonlinear mapping $f\mapsto \beta$, which itself would have to be learned to produce an approximation of the complete kernel mapping $(g,f)\mapsto k$ as a composition of two mappings. This is why the results of the previous sections do not provide a solution to the general case \eqref{eq-PIDE}, \eqref{eq-PIDEBC}, but only a pedagogical introduction, and this is why a generalization in this section is necessary.

To find the mapping from the PIDE coefficients $(g,f)$ to the kernel $k$ of the backstepping controller
\begin{equation}\label{eq-bkstfbkkPIDE}
    U(t) = \int_0^1 k(1,y) u(y,t) dy,
\end{equation}
we take the backstepping transform 
\begin{eqnarray}
 	w(x, t) = u(x, t) - \int_0^x k(x,y)u(y, t) dy,
\end{eqnarray}
which is not a simple convolution as in \eqref{eq-bkrsttransconv}, with a kernel depending on a single argument, and the same target system as in \eqref{eq-targetPDE}, \eqref{eq-targetPDEBC}, $
	 w_t = w_x, \ 
	w(1) = 0$, 
which gives the kernel integral equation derived in \cite{krstic2008Backstepping} as
\begin{eqnarray} \label{eq:kernelIntegral}
    k(x,y) &=& F_0(x,y)+F(g,f,k)(x,y),
\end{eqnarray}
where
\begin{align}\label{eq-F_0}
&F_0(x,y) := - g(x-y)- \int_0^y f(x-y+\xi,\xi) d\xi 
    \\  \label{eq-Fgf}
&F(g,f,\kappa)(x,y):= 
\int_0^{x-y} g(\xi) \kappa(x-y,\xi) d\xi
\nonumber\\
& +\int_0^y \int_0^{x-y} f(\xi+\eta,\eta)\kappa(x-y+\eta, \xi+\eta) d\xi d\eta.         
\end{align}

Denote ${\cal T} = \{0\leq y\leq x\leq 1\}$ as the domain of the functions $f$ and $k$. Further, denote 
\begin{align}
    \bar g = \sup_{[0,1]} |g|, \quad & \overline{g'} = \sup_{[0,1]} |g'|
    \\
    \bar f = \sup_{\cal T} |f|, \quad &\overline{f_x} = \sup_{\cal T} |f_x|. 
\end{align}
It was proven in \cite{krstic2008Backstepping}
that 
\begin{equation}\label{eq-kbarbound}
    |k(x,y)|\leq \left(\bar g + \bar f\right) {\rm e}^{\bar g + \bar f} =:\bar k\left(\bar g, \bar f\right).
\end{equation}
For the partial derivatives
\begin{eqnarray}
k_x &=&
F_0^x + F(g,f,k_x)\\
k_y &=&  F_0^y - F(g,f,k_x)
\end{eqnarray}
where
\begin{align}
&    F_0^x(x,y) =
-\int_0^y f_x(x-y+\xi, \xi) d\xi
+\phi_0(x,y)
\\
&F_0^y(x,y) = f_x(x,y) + \int_y^x f(\sigma, y) k(x, \sigma) d\sigma
-\phi_0(x,y)
\\
&    \phi_0(x,y) =
-g'(x-y)+g(x-y)k(x-y,x-y)
\nonumber\\
& + \int_0^y f(x-y+\eta, \eta) k(x-y+\eta, x-y+\eta)  d\eta
\end{align}
it is proven using the same approach (successive approximation, infinite series, induction) that, on the triangle ${\cal T}$, 
\begin{eqnarray}\label{eq-kxbarbound}
    |k_x(x,y)| &\leq& \left(\overline{f_x}+\overline{\phi_0}\right){\rm e}^{\bar g + \bar f} = :\overline{k_x}\left(\bar g, \overline{g'},\bar f, \overline{f_x}\right)
    \\  \label{eq-kybarbound}
    |k_y(x,y)| &\leq& \overline{f_x} + \bar f \bar k + \overline{\phi_0}+ \left(\bar g + \bar f\right) \overline{k_x}
\end{eqnarray}
where
\begin{equation}
    \overline{\phi_0}(\bar g, \overline{g'},\bar f) := \overline{g'}+
\left(\bar g + \bar f\right) \bar k.
\end{equation}
Hence, along with the existence, uniqueness, and continuous differentiability of $k$
\cite{krstic2008Backstepping}, we have proven the following.

\begin{lemma}
   The map ${\cal Q}:C^1([0,1]\times{\cal T})\rightarrow C^1([0,1]\times{\cal T})$ defined by $k = {\cal Q}(g,f)$, and representing the solution of \eqref{eq:kernelIntegral},   is continuous. In addition,  $|k|, |k_x|, |k_y|$ are bounded, respectively, as in \eqref{eq-kbarbound}, \eqref{eq-kxbarbound}, \eqref{eq-kybarbound}, in terms of the bounds on $|g|, |g'|, |f|, |f_x|$. 
\end{lemma}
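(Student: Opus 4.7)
The plan is to mirror the successive-approximation scheme that proved Lemma \ref{lemma:holder}, now on the triangle ${\cal T}$ and with additional bookkeeping needed to track the derivatives in $x$ and $y$. Since the bounds \eqref{eq-kbarbound}--\eqref{eq-kybarbound} have only been sketched in the text preceding the lemma (``it is proven using the same approach''), the proof should first supply them cleanly, and then use them as the workhorse for the continuity claim.

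First I would re-derive the sup-norm bound on $k$. Using the Picard iteration $k^0 = F_0$, $k^{n+1} = F_0 + F(g,f,k^n)$, the differences $\Delta k^n = k^n - k^{n-1}$ satisfy the linear recursion $\Delta k^{n+1} = F(g,f,\Delta k^n)$ with $\Delta k^0 = F_0$. Because the operator $F$ in \eqref{eq-Fgf} integrates $\kappa$ against $g$ and $f$ over regions whose total ``length'' is proportional to $(x-y)$, a direct induction (analogous to \eqref{eq-Deltakninduction}) yields $|\Delta k^n(x,y)| \leq (\bar g + \bar f)^{n+1}\, p_n(x,y)/n!$ with $p_n$ a polynomial of degree $n$ and bounded on ${\cal T}$, whose series sums to \eqref{eq-kbarbound}. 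Applying the same machinery to the linear Volterra equations for $k_x$ and $k_y$, whose source terms $F_0^x$ and $F_0^y$ are bounded in terms of $\bar g, \overline{g'}, \bar f, \overline{f_x}$ and the already-known bound $\bar k$, delivers \eqref{eq-kxbarbound} and \eqref{eq-kybarbound}.

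For continuity, let $k_i = {\cal Q}(g_i,f_i)$ for $i=1,2$, and set $\delta g = g_1 - g_2$, $\delta f = f_1 - f_2$, $\delta k = k_1 - k_2$. Subtracting the two fixed-point equations \eqref{eq:kernelIntegral} and exploiting the linearity of $F$ in its third argument gives a linear Volterra equation of the form $\delta k = \Delta_0 + F(g_1, f_1, \delta k)$, where $\Delta_0$ collects all terms in which $\delta g$ or $\delta f$ appears (namely, the contributions from $F_0$ and from the ``frozen'' copy $k_2$ in $F(\delta g, \delta f, k_2)$). Using the already-derived bound on $k_2$, one obtains $\|\Delta_0\|_\infty \leq C_0\bigl(\|\delta g\|_\infty + \|\delta f\|_\infty\bigr)$ with $C_0$ depending on the a priori bounds for $(g_i,f_i)$. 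The same Picard/induction trick applied to this linear equation then produces $\|\delta k\|_\infty \leq C_1\bigl(\|\delta g\|_\infty + \|\delta f\|_\infty\bigr)$, i.e., Lipschitz dependence in sup-norm.

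To lift continuity to $C^1$, repeat the procedure on $\delta k_x$ and $\delta k_y$. Differentiating the perturbation equation gives linear Volterra equations for $\delta k_x$ and $\delta k_y$ whose source terms depend on $\delta g, \delta g', \delta f, \delta f_x$, and on the already-controlled $\delta k, \delta k_x$. Using the previous step to absorb the $\delta k$-contributions closes the chain and yields analogous Lipschitz bounds $\|\delta k_x\|_\infty, \|\delta k_y\|_\infty \leq C_2\|(\delta g,\delta f)\|_{C^1}$. The main obstacle I expect is purely organizational: the source term driving $\delta k_y$ inherits contributions from essentially every previously bounded quantity ($\delta g', \delta f_x, \delta k, \delta k_x$), so obtaining an explicit Lipschitz constant requires careful tallying via the triangle inequality---no new analytic idea beyond Lemma \ref{lemma:holder}. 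Once the three estimates for $\delta k, \delta k_x, \delta k_y$ are assembled, continuity of ${\cal Q}$ in the $C^1$ topology, together with the bounds \eqref{eq-kbarbound}--\eqref{eq-kybarbound}, is immediate.
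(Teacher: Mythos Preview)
Your proposal is correct and follows the same successive-approximation strategy the paper uses (and partly cites from \cite{krstic2008Backstepping}) to obtain the bounds \eqref{eq-kbarbound}--\eqref{eq-kybarbound}. For the continuity claim, you actually go further than the paper: the paper simply asserts continuity of ${\cal Q}$ on the basis of the explicit bounds and the existence/uniqueness result, whereas you supply a full Lipschitz argument in $C^1$ by subtracting fixed-point equations and iterating, in direct analogy with Lemma~\ref{lemma:holder}. Your additional bookkeeping for $\delta k_x,\delta k_y$ is the right way to make the $C^1$ continuity rigorous and involves no new analytic ideas beyond those already in the paper.
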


From the continuity of the map ${\cal Q}$ on the Banach space $C^1([0,1]\times{\cal T})$, the following result is inferred from the DeepONet theorem. 

\begin{lemma}
For all $\epsilon>0$ and $B_g, B_{g'}, B_f, B_{f_x}>0$ there exists an NO $\hat{\cal Q}$ such that, for all $(x,y) \in {\cal T}$, 
 \begin{align}\label{eq-Qerrorbound}
 &\left|\hat{\cal Q}(g,f)(x,y) - {\cal Q}(g,f)(x,y)\right|
 \nonumber \\& 
 +\left|{\partial\over\partial y}\left(\hat{\cal Q}(g,f)(x,y) 
 - {\cal Q}(g,f)(x,y)\right)\right|
 \nonumber\\& 
 +\left|{\partial\over\partial y}\left(\hat{\cal Q}(g,f)(x,y) 
 - {\cal Q}(g,f)(x,y)\right)\right| < \epsilon 
 \end{align}
 for all functions $g\in C^1([0,1])$
 and $f\in C^1({\cal T})$ whose derivatives are Lipschitz and which satisfy $\|g\|_\infty \leq B_g$, $\|g'\|_\infty \leq B_{g'}$, $\|f\|_\infty \leq B_f$, $\|f_x\|_\infty \leq B_{f_x}$. 
\end{lemma}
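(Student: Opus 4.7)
The plan is to invoke a $C^1$-topology version of the DeepONet theorem (Theorem~\ref{thm-DeepONet}) applied to the kernel operator ${\cal Q}$, which the preceding lemma showed is continuous from a subset of $C^1([0,1])\times C^1({\cal T})$ into $C^1({\cal T})$, with explicit a priori bounds \eqref{eq-kbarbound}--\eqref{eq-kybarbound} on $|k|$, $|k_x|$, $|k_y|$ in terms of $B_g,B_{g'},B_f,B_{f_x}$. Two preconditions must be established: compactness of the admissible input set, and a strengthened DeepONet theorem that simultaneously controls the function and its first-order partial derivatives.

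First, compactness of the admissible set follows directly from Arzel\`a--Ascoli. The set $\mathcal{A}:=\{(g,f):\|g\|_\infty\le B_g,\ \|g'\|_\infty\le B_{g'},\ \|f\|_\infty\le B_f,\ \|f_x\|_\infty\le B_{f_x},\ g' \text{ and } f_x \text{ Lipschitz}\}$ is uniformly bounded in $C^1$, and the Lipschitz hypothesis on $g'$ and $f_x$ renders these derivatives equicontinuous, so $\mathcal{A}$ is precompact in $C^1([0,1])\times C^1({\cal T})$. The continuous operator ${\cal Q}$ extends to its closure, a compact subset on which all subsequent bounds hold uniformly.

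Second, I choose a $C^\infty$ activation (e.g., tanh) in the trunk network, so that $\hat{\cal Q}(g,f)(x,y)=\sum_{k=1}^p g^{\cal N}(\cdot;\vartheta^{(k)})\,f^{\cal N}(x,y;\theta^{(k)})$ is $C^\infty$ in $(x,y)$ with $\partial_x\hat{\cal Q}$ and $\partial_y\hat{\cal Q}$ obtained by differentiating only the trunk term, since the branch coefficients are independent of $(x,y)$. The standard encoder/approximator/reconstructor decomposition used to prove Theorem~\ref{thm-DeepONet} now goes through in $C^1({\cal T})$: the image ${\cal Q}(\mathcal{A})$, being the continuous image of a compact set, is compact in $C^1({\cal T})$ and thus admits a finite-dimensional $\epsilon/3$-net in the $C^1$ norm; a smooth trunk basis can reproduce this net to within a further $\epsilon/3$ in $C^1$ by the Hornik-style $C^1$ universal approximation for smooth-activation networks; and the branch network fits the continuous coefficient map to accuracy $\epsilon/3$ by the standard $C^0$ universal approximation on the compact encoder image. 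Summing the three contributions yields \eqref{eq-Qerrorbound}.

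The main obstacle will be the Hornik-style $C^1$ universal approximation step: one must exhibit a single trunk basis that approximates $k(\cdot,\cdot)$ in the $C^1$ norm uniformly over $(g,f)\in\mathcal{A}$, rather than only in $C^0$ as delivered by Theorem~\ref{thm-DeepONet}. I would either cite the corresponding strengthening of \cite[Theorem 2.1]{lu2021advectionDeepONet} from $C^0$ to $C^1$ topology (which is standard once a smooth trunk activation is fixed) or, as an equivalent route, apply Theorem~\ref{thm-DeepONet} three times to the continuous triple $(g,f)\mapsto(k,k_x,k_y)$ and choose a common trunk basis whose values, $\partial_x$-derivatives, and $\partial_y$-derivatives jointly span the required finite-dimensional target subspaces. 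Either route produces a single NO $\hat{\cal Q}$ satisfying \eqref{eq-Qerrorbound}.
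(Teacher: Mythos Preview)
Your proposal is correct and follows the paper's approach: the paper simply states that the result is ``inferred from the DeepONet theorem'' via the $C^1$ continuity of ${\cal Q}$ established in the preceding lemma, and gives no further argument. Your treatment of compactness via Arzel\`a--Ascoli and of the $C^0$-to-$C^1$ strengthening of Theorem~\ref{thm-DeepONet} supplies the rigor that the paper leaves implicit.
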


Denoting $\tilde k = k-\hat k = {\cal K}(g,f) - \hat {\cal K}(g,f)$, \eqref{eq-Qerrorbound} can be written as $\left|\tilde k(x,y)\right|+\left|\tilde k_x(x,y)\right| + \left|\tilde k_y(x,y)\right|<\epsilon$. 

Now take the backstepping transformation 
\begin{equation}
    \hat w(x,t) = u(x,t) - \int_0^x \hat k(x,y) u(y,t) dy.
\end{equation}
With the control law
\begin{equation}\label{eq-fbkgf}
    U(t) = \int_0^1 \hat k(1,y) u(y,t) dy,
\end{equation}
the target system becomes
\begin{eqnarray}\label{eq-target-gf}
    \hat w_x(x,t) &=& \hat w_t(x,t) +\delta(x) \hat w(0,t) 
    \nonumber \\
    &&+ \int_0^x \delta_1(x,y) u(y,t) dy
    \\
    \hat w(1,t) &=&0,
\end{eqnarray}
where
\begin{eqnarray}
    \delta_0(x) &=& -\tilde k(x,0) + \int_0^x g(y) \tilde k(x,y) dy
    \\
    \delta_1(x,y) &=& -\tilde k_x(x,y) -\tilde k_y(x,y) 
    \nonumber\\
    && +\int_y^x f(\xi,y) \tilde k(x,\xi) d\xi
\end{eqnarray}
satisfy
\begin{eqnarray}
    \|\delta_0\|_\infty &\leq & (1+\bar g) \epsilon\\
    \|\delta_1\|_\infty &\leq & (2+\bar f) \epsilon.
\end{eqnarray}

Since the state $u$ appears under the integral in the $\hat w$-system \eqref{eq-target-gf}, in the Lyapunov analysis we need the inverse backstepping transformation 
\begin{equation}\label{eq-invbkstgf}
    u(x,t) = \hat w(x,t) + \int_0^x \hat l(x,y) \hat w(y,t) dy.
\end{equation}
It is shown in \cite{krstic2008boundary} that the direct and inverse backstepping kernels satisfy in general the relationship
\begin{equation}
\hat l(x,y) = \hat k(x,y) + \int_y^x \hat k(x,\xi) \hat l(\xi,y) dy.     
\end{equation}
The inverse kernel satisfies the following conservative bound
\begin{equation}
    \|\hat l\|_\infty \leq \|\hat k\|_\infty {\rm e}^{\|\hat k\|_\infty}.
\end{equation}
Since $\|k-\hat k\|_\infty <\epsilon$, we have that $\|\hat k\|_\infty \leq \|k\|_\infty+\epsilon$. With \eqref{eq-kbarbound} we get $\|\hat k\|_\infty \leq \bar k(\bar g, \bar f) +\epsilon$ and hence 
\begin{equation}
    \|\hat l\|_\infty \leq \left(\bar k +\epsilon\right) {\rm e}^{\bar k +\epsilon}.
\end{equation}
Going back to \eqref{eq-invbkstgf}, we get
\begin{equation}
    \|u\| \leq \left( 1+ \left(\bar k +\epsilon\right) {\rm e}^{\bar k +\epsilon}\right) \|\hat w\|.
\end{equation}
Mimicking and generalizing the steps of the proofs of Lemma \ref{lem-Lyap} and Theorem \ref{thm-stabDeepONet}, we get the following exponential stability result. (We omit the explicit but conservative and exceedingly complicated and uninformative estimates of the overshoot coefficient, the decay rate, and the upper bound $\epsilon^*$ on the approximation accuracy needed to guarantee stability under the gain approximation.)

\begin{theorem}
\label{thm-stabDeepONet-gf}
Let $B_g, B_{g'}, B_f, B_{f_x}>0$ be arbitrarily large and consider the system \eqref{eq-PIDE}, \eqref{eq-PIDEBC} with any $g\in C^1([0,1])$
 and $f\in C^1({\cal T})$
 whose derivatives are Lipschitz and which satisfy $\|g\|_\infty \leq B_g$, $\|g'\|_\infty \leq B_{g'}$, $\|f\|_\infty \leq B_f$, $\|f_x\|_\infty \leq B_{f_x}$. There exists a  sufficiently small $\epsilon^*(B_g, B_{g'}, B_f, B_{f_x})>0$ such that the feedback law \eqref{eq-fbkgf} with the NO gain kernel $\hat k = \hat{\cal Q}(g,f)$ of arbitrary desired accuracy of approximation $\epsilon\in (0,\epsilon^*)$ in relation to the exact backstepping kernel $k$ ensures that there exist $M, c^*>0$ such that the closed-loop system satisfies the exponential stability bound
   \begin{equation}\label{eq-esestimategf}
    \|u(t)\| \leq M {\rm e}^{-c^*t/2} \|u(0)\|, \qquad \forall t\geq 0. 
\end{equation}
\end{theorem}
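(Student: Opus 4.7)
The argument mirrors the Lyapunov analysis of Lemma~\ref{lem-Lyap} and Theorem~\ref{thm-stabDeepONet}, with one new feature to accommodate. I read \eqref{eq-target-gf} as the transport PDE $\hat w_t = \hat w_x + \delta_0(x)\hat w(0,t) + \int_0^x \delta_1(x,y) u(y,t)\,dy$ with $\hat w(1,t)=0$, and take the weighted Lyapunov functional $V(t) = \int_0^1 e^{cx}\hat w^2(x,t)\,dx$. Differentiating in time, substituting the PDE, integrating by parts in $x$ and using $\hat w(1,t)=0$ yields $\dot V = -\hat w^2(0,t) - cV + 2\hat w(0,t)\int_0^1 e^{cx}\delta_0(x)\hat w(x,t)\,dx + 2\int_0^1 e^{cx}\hat w(x,t)\int_0^x \delta_1(x,y)u(y,t)\,dy\,dx$. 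The first three summands are handled verbatim as in Lemma~\ref{lem-Lyap}: Young's inequality on the cross-term with $\hat w(0,t)$ combined with Cauchy--Schwarz on the spatial integral gives a contribution of size $C_1\|\delta_0\|_\infty^2 V$ at the expense of a fraction of $\hat w^2(0,t)$.

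The novel term is the last one, involving $u$ under a Volterra integral. I first use Young's inequality to split it as $\eta V + \eta^{-1}\int_0^1 e^{cx}\bigl(\int_0^x \delta_1(x,y) u(y,t)\,dy\bigr)^2 dx$, then apply Cauchy--Schwarz on the inner integral to majorize the right piece by $C_2(c)\eta^{-1}\|\delta_1\|_\infty^2 \|u(\cdot,t)\|^2$. Since $u$ still appears, I close the estimate in $V$ alone via the inverse backstepping transformation~\eqref{eq-invbkstgf}, which gives $\|u(\cdot,t)\| \le (1+\|\hat l\|_\infty)\|\hat w(\cdot,t)\| \le (1+\|\hat l\|_\infty)\sqrt{V(t)}$. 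Inserting $\|\delta_0\|_\infty \le (1+B_g)\epsilon$, $\|\delta_1\|_\infty \le (2+B_f)\epsilon$, and the inverse-kernel bound $\|\hat l\|_\infty \le (\bar k+\epsilon)e^{\bar k+\epsilon}$ with $\bar k = \bar k(B_g,B_f)$, I obtain $\dot V \le -\tfrac12\hat w^2(0,t) - \bigl[\,c - \eta - A(B_g,B_{g'},B_f,B_{f_x},c,\eta)\,\epsilon^2\,\bigr] V$ for an explicit but conservative constant $A$.

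Fixing $\eta = c/2$ and choosing $\epsilon^*(B_g,B_{g'},B_f,B_{f_x})>0$ small enough makes the bracket some $c^*>0$, so $V(t) \le V(0)e^{-c^* t}$. The sandwich estimate $(1+\|\hat l\|_\infty)^{-2}\|u\|^2 \le V \le e^c(1+\|\hat k\|_\infty)^2 \|u\|^2$---derived exactly as in~\eqref{eq-Vsandwich} from the direct and inverse backstepping transformations together with $e^{cx}\in[1,e^c]$---then converts the decay of $V$ into~\eqref{eq-esestimategf} with $M = e^{c/2}(1+\|\hat k\|_\infty)(1+\|\hat l\|_\infty)$, bounded in terms of the data via $\|\hat k\|_\infty \le \bar k+\epsilon$ and the bound on $\|\hat l\|_\infty$ just stated. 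The main obstacle, compared with the proof of Theorem~\ref{thm-stabDeepONet}, is that the two-variable kernel forces a $u$-dependent distributed perturbation $\int_0^x \delta_1(x,y) u(y,t)\,dy$ into the target system; a pure $\hat w$-Lyapunov bound is therefore not directly closable, and the inverse transformation together with the $L^\infty$-bound on $\hat l$ established in the paragraph preceding the theorem are exactly the tools needed to close it. The rest is routine bookkeeping, which is why $M$, $c^*$, and $\epsilon^*$ are left implicit in the theorem statement.
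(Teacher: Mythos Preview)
Your proposal is correct and follows exactly the route the paper sketches: the paper itself merely states that the proof proceeds by ``mimicking and generalizing the steps of the proofs of Lemma~\ref{lem-Lyap} and Theorem~\ref{thm-stabDeepONet}'' and explicitly flags that the inverse backstepping transformation~\eqref{eq-invbkstgf} together with the bound on $\|\hat l\|_\infty$ is needed precisely because $u$ (not $\hat w$) appears under the Volterra integral in~\eqref{eq-target-gf}. Your Lyapunov computation, treatment of the $\delta_0$ term via Young/Cauchy--Schwarz, closure of the $\delta_1$ term through $\|u\|\le(1+\|\hat l\|_\infty)\|\hat w\|$, and final sandwich estimate are the intended steps, carried out in more detail than the paper actually provides.
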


\section{Simulations: Stabilization of PIDE with NO-Approximated Gain Kernel $f\mapsto {\cal Q}(f)$ Dependent on $(x,y)$}
\label{sec-simulationsQ}

For clarity, we consider the systems of the form \eqref{eq-PIDE} with $g=0$, so that the focus is solely on the mapping of two-dimensional plant kernels $f(x,y)$ into two-dimensional backstepping kernels $k(x,y)$, which are governed by the (double) integral equation
\begin{align} \label{eq:kernelIntegralf}
   & k(x,y) = - \int_0^y f(x-y+\xi,\xi) d\xi 
   \nonumber \\
&
     +  \int_0^y \int_0^{x-y} f(\xi+\eta,\eta)k(x-y+\eta, \xi+\eta) d\xi d\eta.
\end{align}
We illustrate in this section the NO approximation $\hat{\cal Q}$ of the nonlinear operator ${\cal Q}:f\mapsto k$ mapping $C^1({\cal T})$ into itself. First, in Figure \ref{fig:7} we present the construction of the two-dimensional function $f$ via a product of Chebyshev polynomials and highlight the PDE's open-loop instability. Then, we showcase the corresponding learned kernel and the error in Figure \ref{fig:8}. The pointwise error for the learned kernel peaks at around $10\%$ of $k$, as it ``ripples'' in the right of Figure \ref{fig:8}. The learned kernel $\hat k$ achieves stabilization in Figure \ref{fig:9} (right), but not by $t=1$, as it would with  perfect $k$
 in \eqref{eq-targetPDE}, \eqref{eq-targetPDEBC}, but only exponentially, as guaranteed for the learned $\hat k$ in Theorem \ref{thm-stabDeepONet-gf}.
 
 For this 2D problem ($f$ and $k$ are functions of $x$ and $y$), we design the branch network of the NO with convolutional neural networks (CNNs) as they have had large success in handling 2D inputs \cite{alexnet,lecunCNN}. The network consists of 70 million parameters (due to the CNNs), yet only takes around 5 minutes to train. On 900 instances, the network achieves a relative $L_2$ training error of $1.3e-3$ and a relative $L_2$ testing error of $1.8e-3$ on 100 instances.

\section{Conclusions}
\label{sec-concl}

\paragraph{What is achieved} 
PINN, DeepONet, FNO, LOCA, NOMAD---they have all been used with success to approximate solution maps of PDEs. What we introduce is a novel framework: for approximating the solution maps for {\em integral equations} \eqref{eq:kernelIntegral}, \eqref{eq-F_0}, \eqref{eq-Fgf}, or simply \eqref{eq:1.11}, for the feedback gain functions $k$ in {\em control of PDEs}. 

We provide the guarantees that (i) any desired level of accuracy of NO approximation of the backstepping gain kernel is achieved for any $\beta$ that satisfies $\|\beta\|_\infty \leq B$ for arbitrarily large given $B>0$, and (ii) the PDE is stabilized with an NO-approximated gain kernel for any $\|\beta\|_\infty \leq B$. 

These results generalize to a class of PIDEs with functional coefficients $(g,f)$ that depends on two variables, $(x,y)$, and result in kernels $k$ that are also functions of $(x,y)$.

For a given $B>0$ and any chosen positive $\epsilon < \epsilon^*(B)$, the determination of the NO approximate operator $\hat{\cal K}(\cdot)$ is done offline, once only, and such a $\hat{\cal K}(\cdot)$, which depends on $B$ and $\epsilon$, is usable ``forever,'' so to speak, for any recirculation kernel that does not violate $\|\beta\|_\infty \leq B$. 

When the entire PDE backstepping feedback law---rather than just its gain kernel---is being approximated, globality and perfect convergence are lost, but only slightly. Decay remains exponential, over infinite time, and stability is semiglobal. 

\paragraph{What is gained by making a particular controller class with theoretical guarantees the object of learning} By now it is probably clear to the reader that what we present here is a method for learning an entire {\em class} of model-based controllers, by learning the gains $\hat k=\hat{\cal K}(\beta)$, or $\hat k = {\cal Q}(g,f)$, for any plant parameters $\beta$ or $(g,f)$. What does one profit from learning a particular class of  controllers backed up by theory? Suppose that, instead of  learning  the {\em PDE backstepping} gain mapping ${\cal K}(\cdot)$, we were trying to find {\em any} gain function $k(x)$ that meets some performance objective. This goal could be formulated as a finite-time minimization of $\int_0^{t_{\rm f}} \left(\int_0^1 u^2(x,t) dx + U^2(t)\right) dt $, for a given $\beta$, over a set of gain functions $k$ for a ball of initial conditions $u_0(x) = u(x,0)$ around the origin. Not only would this be a much larger search, over  $(k,u_0)$, but such a finite-time minimization could ensure only finite-time performance, not exponential stability. 

Our achievement of global exponential stability (not ``practical''/approximate, but with an actual convergence of the state to zero) relies crucially---in each of the lemmas and theorems that we state---on the theoretical steps from the PDE backstepping toolkit (backstepping transform,  target system,  integral equation for  kernel, successive infinite-series approximation,  Lyapunov analysis). It is only by assigning the NO a service role in an otherwise model-based design that stability is assured. Stability assurance is absent from learning approaches in which the feedback law design is left to ML and a finite-time cost, as in RL for the traffic flow PDEs \cite{9568241}.

\paragraph{Future research} Of immediate  interest are the extensions of the results of this paper to 
parabolic PDEs in \cite{1369395}, as well as extensions from the approximations of controller kernels to the NO approximations of PDE backstepping observer kernels \cite{SMYSHLYAEV2005613}, with guarantees of observer convergence, and with observer-based stabilization (separation principle). 
 
\begin{figure*}
    \centering
    \includegraphics[width=\textwidth]{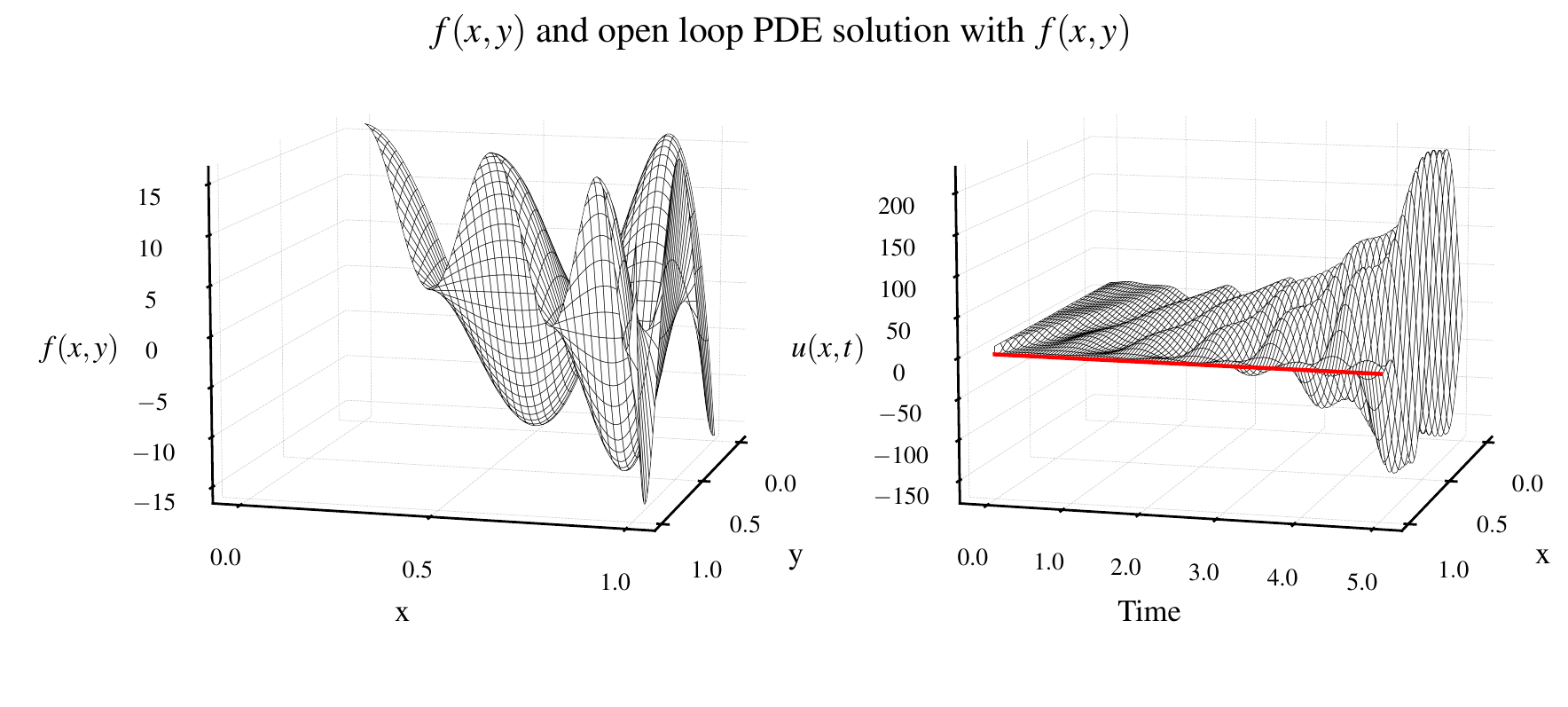}
    \caption{On the left is an example of $f(x,y)$ generated by the following form $f(x, y)$ = $\beta(x)\beta(y)$ where $\beta$ is the Chebyshev polynomial as in Fig. \ref{fig:1} with $\gamma=6$. On the right is the corresponding PDE with open loop instability.}
    \label{fig:7}
\end{figure*} 

\begin{figure*}
    \centering
    \includegraphics[width=\textwidth]{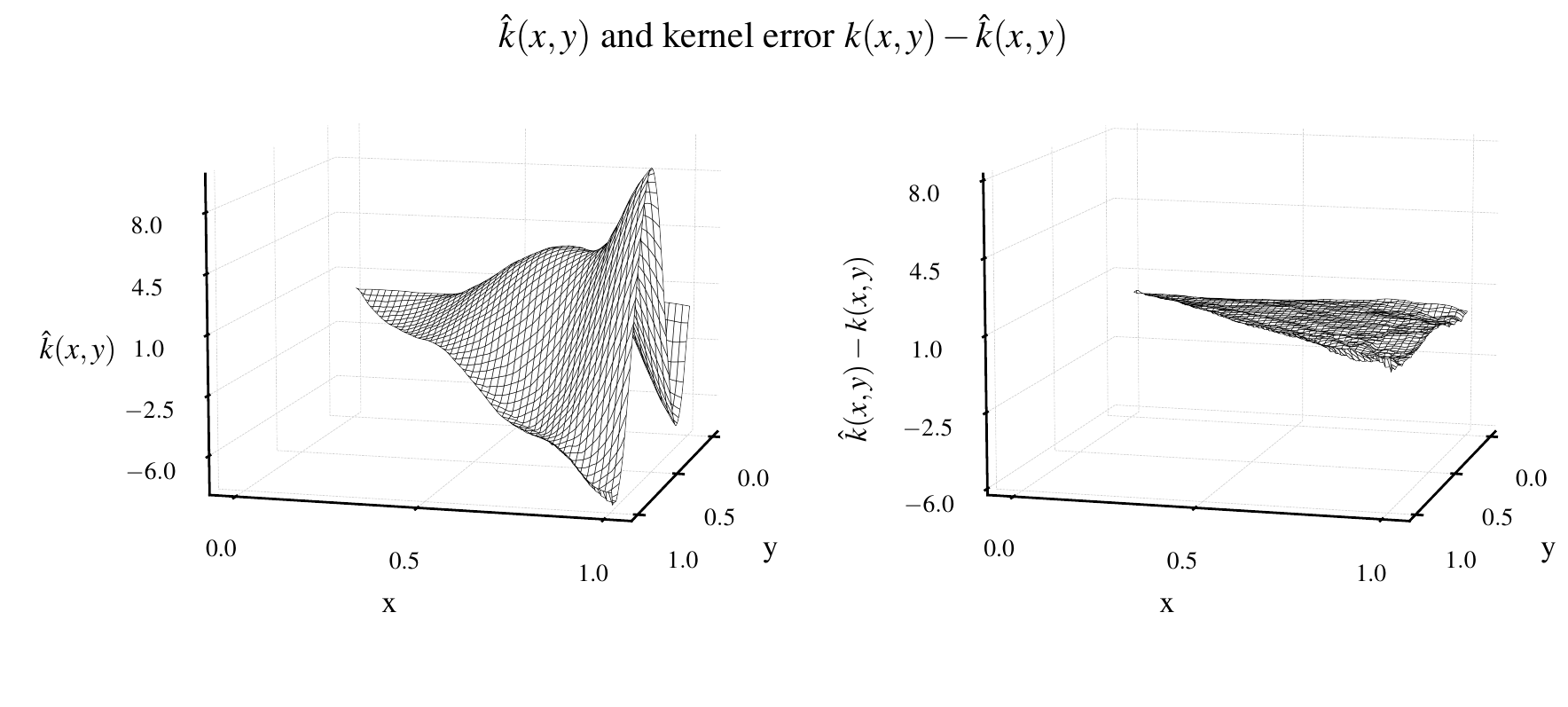}
    \caption{Examples of the learned kernel $\hat{k}(x, y)$ and the  kernel error $k(x, y) - \hat k(x, y)$ with a  peak of about $10\%$ of $k(x, y)$, around the coordinate $(0.9, 0.9)$. The kernel shown corresponds to $f(x, y)$ in Figure \ref{fig:7}.}
    \label{fig:8}
\end{figure*}

\begin{figure*}
    \centering
    \includegraphics[width=\textwidth]{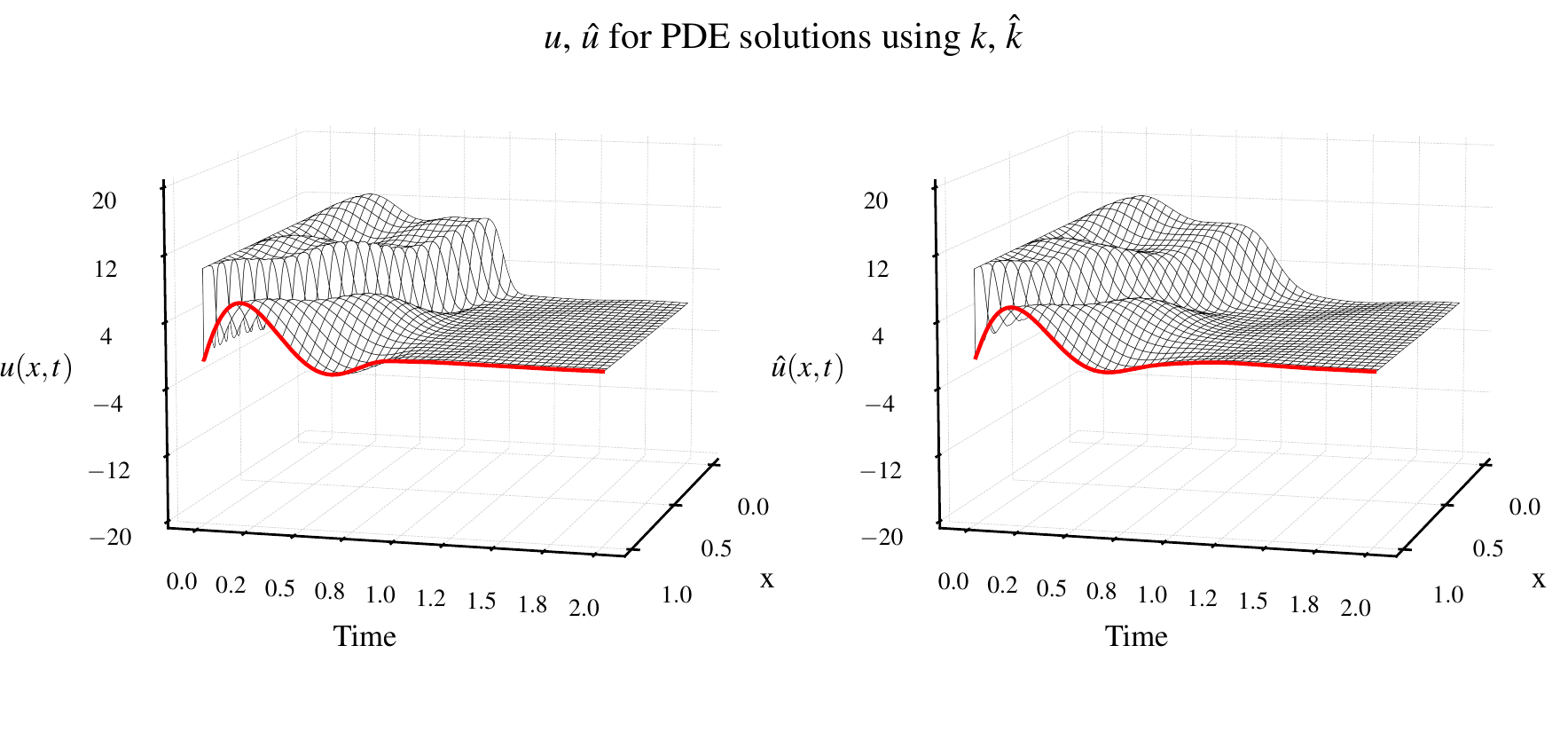}
    \caption{PDE with the analytical kernel on the
    top left and with the learned kernel on the top right. Additionally, the $L_2$ error between the systems peaks around $t=1.3$ at a value of $0.2$. The PDE shown corresponds to the $f(x, y)$ presented in Figure \ref{fig:7}. }
    \label{fig:9}
\end{figure*}

\bibliography{refs}
\bibliographystyle{abbrv}
\vskip -2\baselineskip plus -1fil

\begin{IEEEbiography}[{\includegraphics[width=1in,height=1.25in,clip,keepaspectratio]{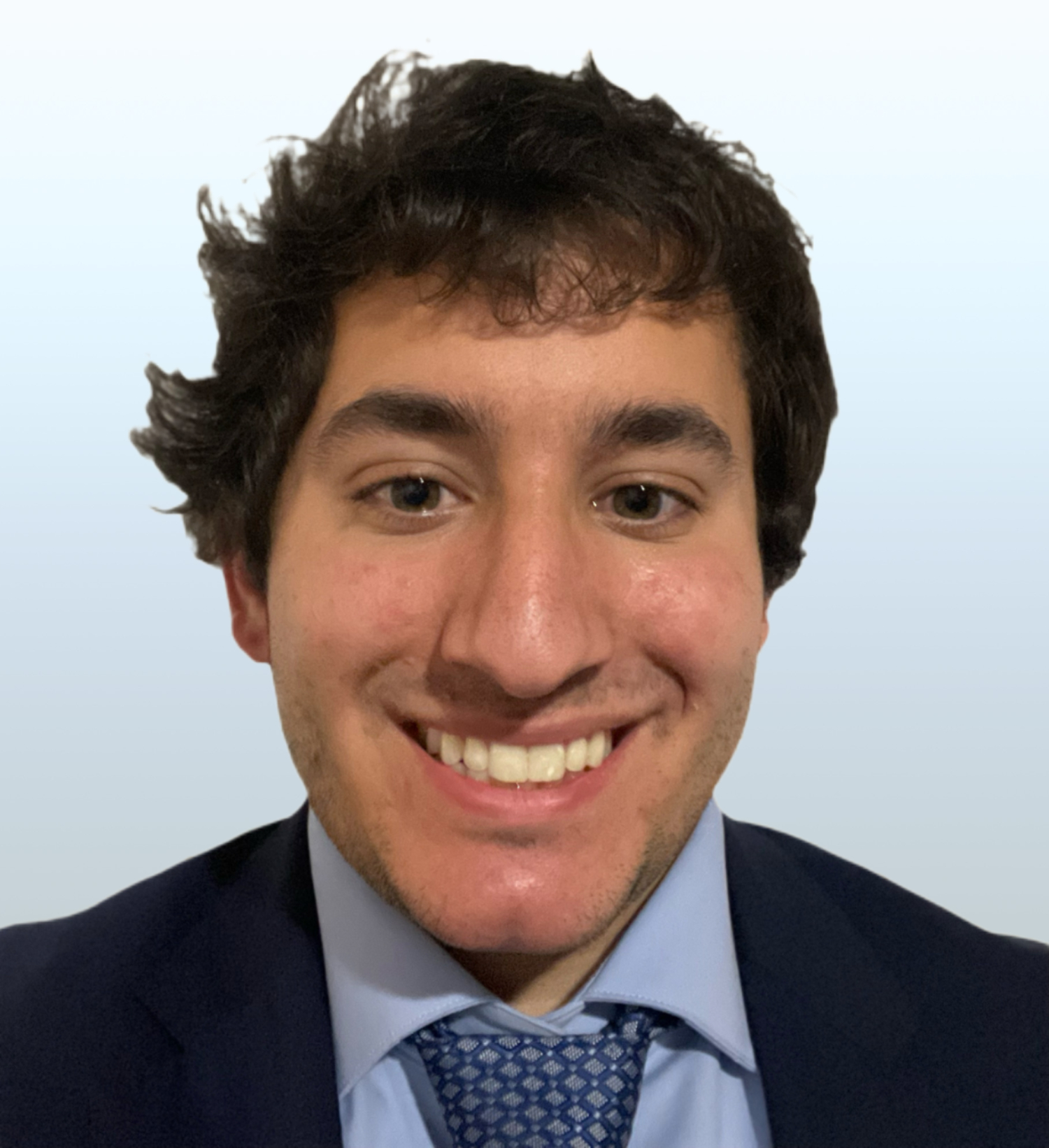}}]{Luke Bhan} received his B.S. and M.S. degrees in Computer Science and Physics from Vanderbilt University in 2022. He is currently pursuing his Ph.D. degree in Electrical and Computer Engineering
at the University of California, San Diego. His
research interests include neural operators, learning-based control, and control of partial differential equations. 
 \end{IEEEbiography}

\vskip -2\baselineskip plus -1fil

\begin{IEEEbiography}[{\includegraphics[width=1in,height=1.25in,clip,keepaspectratio]{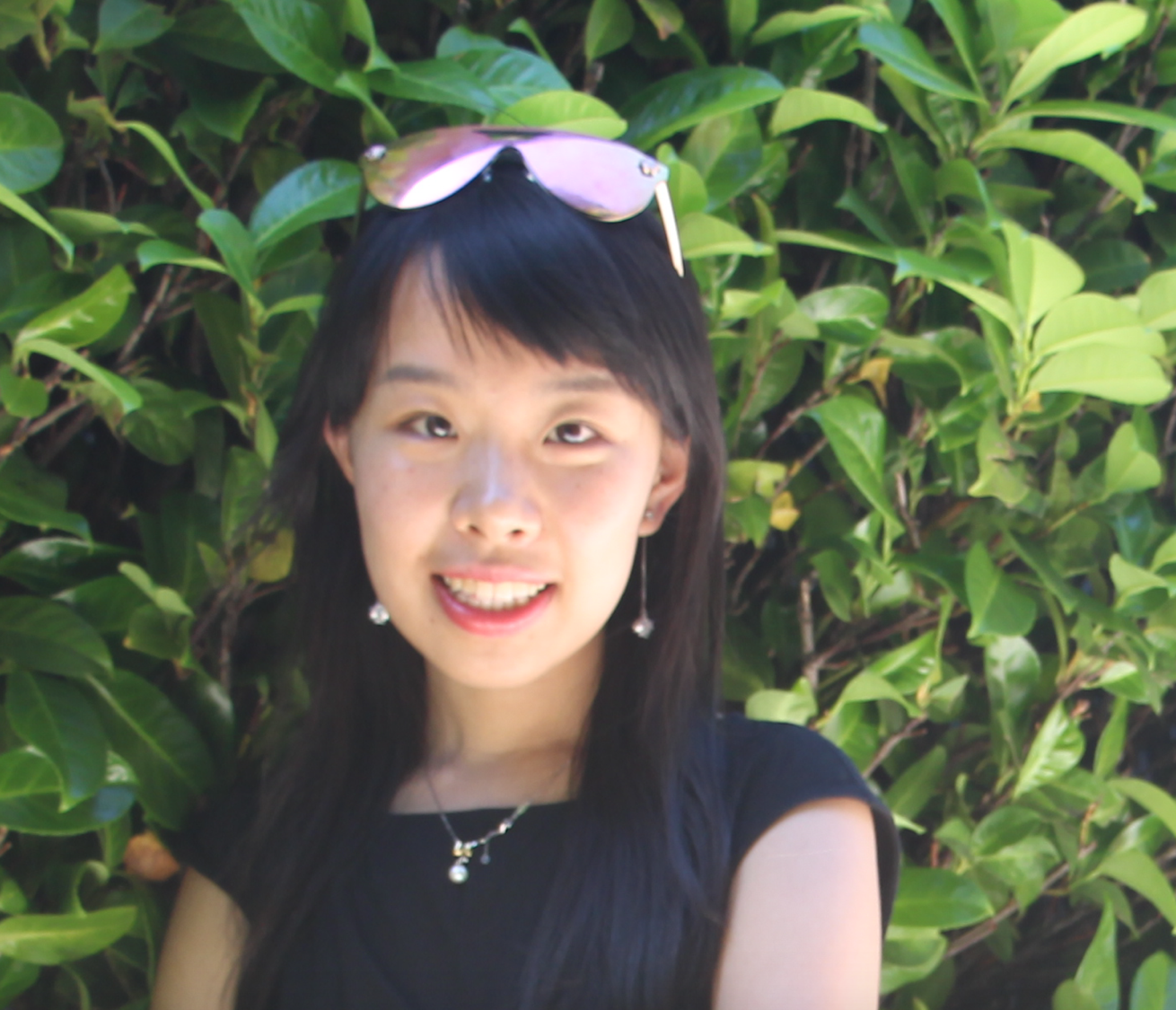}}]{Yuanyuan Shi} is an Assistant Professor of Electrical and Computer Engineering at the
University of California, San Diego. She received her Ph.D. in Electrical Engineering, masters in Electrical Engineering and Statistics, all from the University of Washington,
in 2020.
From 2020 to 2021, she was a postdoctoral scholar at 
the California Institute of Technology. Her research interests include machine learning, dynamical systems, and control, with applications to sustainable power and energy systems.
\end{IEEEbiography}

\vskip -2\baselineskip plus -1fil
\begin{IEEEbiography}[{\includegraphics[width=1in,height=1.25in,clip,keepaspectratio]{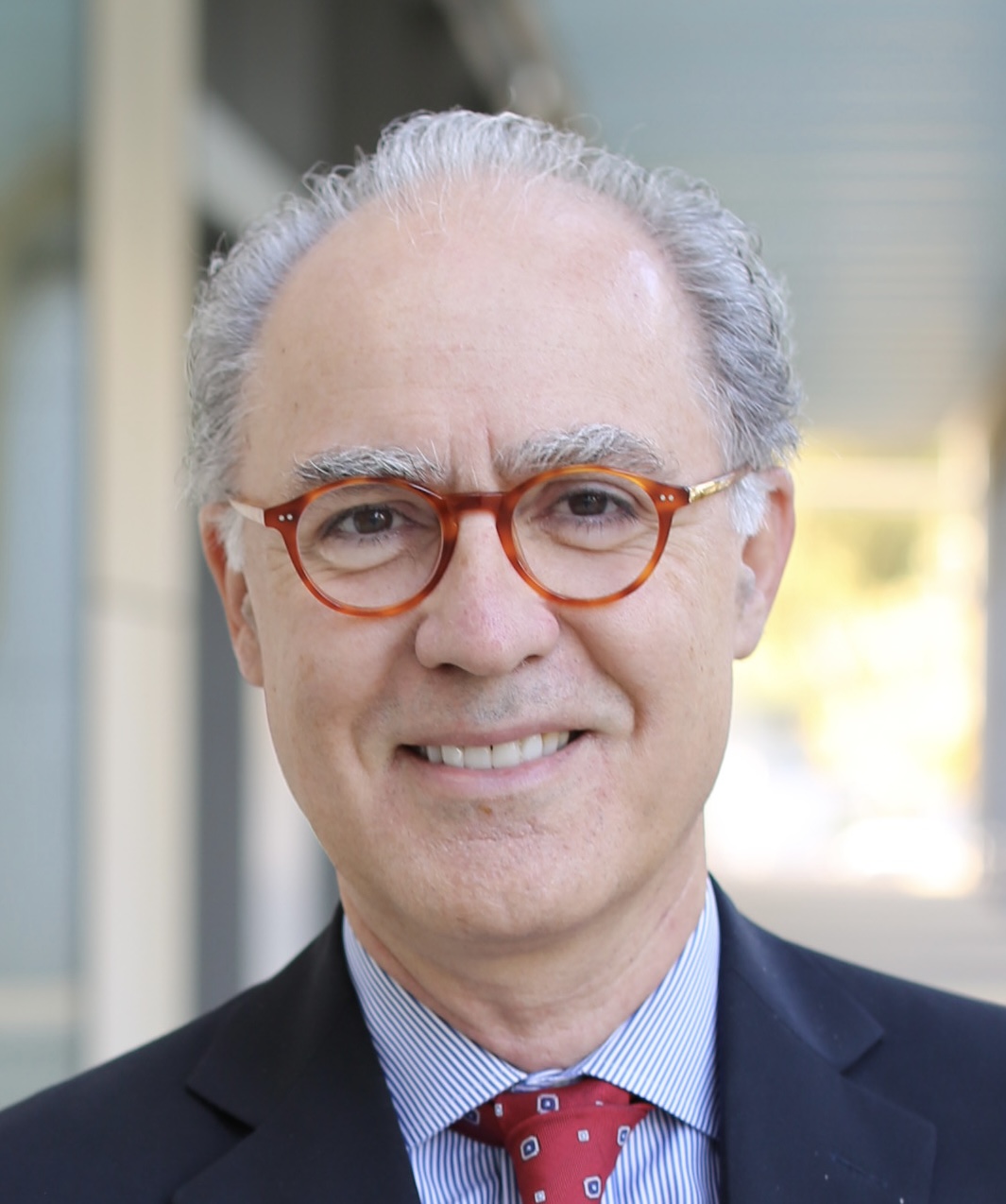}}]{Miroslav Krstic} is Distinguished Professor at UC San Diego, Alspach chair, center director, and Sr.  Assoc. Vice Chancellor for Research. He is Fellow of IEEE, IFAC, ASME, SIAM, AAAS, IET (UK), AIAA (Assoc. Fellow), and member of the Serbian Academy of Sciences and Arts, and Academy of Engineering of Serbia. He has received the Richard E. Bellman Control Heritage Award, SIAM Reid Prize, ASME Oldenburger Medal, Nyquist Lecture Prize, Paynter Outstanding Investigator Award, Ragazzini Education Award, IFAC Ruth Curtain Distributed Parameter Systems Award, IFAC Nonlinear Control Systems Award, Chestnut textbook prize, Control Systems Society Distinguished Member Award, the PECASE, NSF Career, and ONR Young Investigator awards, the Schuck (’96 and ’19) and Axelby paper prizes. He serves as Editor-in-Chief of Systems \& Control Letters, has served as SE in Automatica and IEEE TAC, and is editor of two Springer book series. Krstic has coauthored eighteen books on adaptive, nonlinear, and stochastic control, extremum seeking, control of PDE systems including turbulent flows, and control of delay systems.
\end{IEEEbiography}

\end{document}